\theoremstyle{definition}
\tikzstyle{transition} = [font=\small]
\tikzstyle{line} = [draw,thick,-latex]
\newcommand{\NP}{\textsf{NP}}
\newcommand{\PSPACE}{\textsf{PSPACE}}
\newcommand{\PTIME}{\textsf{P}}
\newcommand{\XP}{\textsf{XP}}
\newcounter{problemcounter}
\newcommand{\problemtitle}[1]{\gdef\@problemtitle{#1}}
\newcommand{\probleminput}[1]{\gdef\@probleminput{#1}}
\newcommand{\problemquestion}[1]{\gdef\@problemquestion{#1}}
  \par\addvspace{.5\baselineskip}
  \par\addvspace{.5\baselineskip}
\title{Constrained Synchronization for Commutative Automata and Automata with Simple Idempotents}
\titlerunning{Synchronization for Commutative Aut. and Aut. with Simple Idempotents}
\author{Stefan Hoffmann}
{Informatikwissenschaften, FB IV, Universit\"at Trier, Germany}
{hoffmanns@informatik.uni-trier.de}
{https://orcid.org/0000-0002-7866-075X}
{}
\authorrunning{S. Hoffmann} 
\keywords{Constrained Synchronization,Commutative Automata,Automata with Simple Idempotents}
\begin{document}
\maketitle
\begin{abstract}
 For general input automata, there exist regular constraint languages
 such that asking if a given input automaton admits
 a synchronizing word in the constraint language
 is \PSPACE-complete or \NP-complete.
 Here, we investigate this problem for commutative automata over an arbitrary alphabet
 and automata with simple idempotents over a binary alphabet
 as input automata. The latter class contains, for example, the \v{C}ern\'y family of automata.
 We find that for commutative input automata, the problem
 is always solvable in polynomial time, for every constraint language.
 For input automata with simple idempotents over a binary alphabet and with a
 constraint language given by a partial automaton with up to three states,
 the constrained synchronization problem is also solvable in polynomial time.
\end{abstract}

\todo[inline]{Noch Subset synchronizatino probleme angucken, monotone automaten

meine FCT, COCOON, ISCTC paper zitieren?

noch solvable ergebnisse aufnehmen?

annahme eingabe über gleichem sigma, sonst z.b. circular hart wenn man zyklus-buchstaben ignoriert

nummerierung der theorem/defs etc

csp als abkürzung rausnehmen

|G| teilt n!, wenn kardinalität nicht geänadert und idempotent 2 angewant, dann 2x die gleiche menge.
|P||G|

alle simple idempotent -> O(1) [kleinste menge beschränkt, beispiel i.A. nicht bei zwei rank n - 1 die immer hin- und hershiften], aber für idempotents n - (|Sigma|+1) mengen erreichbar
https://arxiv.org/pdf/2011.14404.pdf (S.13)

genau eine permutation
genau eine idempotent

nur bounded7sparse constraints und smple idempotent automata, NP möglich?

 bilder einfügen simple idempotent cases. absract neu schreiben. das zeug aus conclusion in haupttext einstreuen.
 
 bild cerny automat? oder bei beschriftung vno obigen bild hinweisen, dass sich dieser so ergibt.

 ich zitieren [30] meine arxiv version, kann doch lata zitieren..., generell viele referenzen doppelt (hopcroft), referenzen reduzieren?

 abstract umformulieren

}

\section{Introduction}

A deterministic semi-automaton (which is an automaton without a distinguished start state and without a set of final states) is \emph{synchronizing} if it admits a reset word, i.e., a word which leads to a definite
state, regardless of the starting state. This notion has a wide range of applications, from software testing, circuit synthesis, communication engineering and the like, see~\cite{DBLP:journals/et/ChoJSP93,San2005,Vol2008}.

The famous \v{C}ern\'y conjecture \cite{Cerny64}
states that a minimal synchronizing word, for an $n$ state automaton, has length
at most $(n-1)^2$. 
The best general upper bound known so far is cubic~\cite{shitov2019}. Historically, the following
bounds have been published:

\begin{center}
\begin{tabular}{ l@{\hskip 0.2in} l@{\hskip 0.2in} l }
    $2^n - n - 1$ & & (1964, \v{C}ern\'y~\cite{Cerny64}) \\ [2pt]
    $\frac{1}{2}n^3 - \frac{3}{2} n^2 + n + 1$ & & (1966, Starke~\cite{Starke66}) \\   [2pt]
    $\frac{1}{2} n^3 - n^2 + \frac{n}{2}$ & & (1970, Kohavi~\cite{Kohavi70}) \\  [2pt]
    $\frac{1}{3} n^3 - n^2 - \frac{1}{3} n + 6$ & & (1970, Kfourny~\cite{Kfoury70}) \\ [2pt]
    $\frac{1}{3} n^3 - \frac{3}{2} n^2 + \frac{25}{6} n - 4$ & & (1971, \v{C}ern\'y et al.~\cite{CernyPirickaRosenauerova71}) \\ [2pt]
    $\frac{7}{27} n^3 - \frac{17}{18} n^2 + \frac{17}{6} n - 3$ & $n \equiv 0 \pmod{3}$ & (1977, Pin~\cite{Pin77a}) \\ [2pt]
    $\left(\frac{1}{2} + \frac{\pi}{36}\right) n^3 + o(n^3)$ & & (1981, Pin~\cite{Pin81a}) \\ [2pt]
    $\frac{1}{6}n^3 - \frac{1}{6}n - 1$ & & (1983, Pin/Frankl~\cite{Frankl82,Pin83a}) \\ [2pt]
 $\alpha n^3 + o(n^3)$ & $\alpha \approx 0.1664$  & (2018, Szyku\l a~\cite{szykula2018}) \\ [2pt]
 $\alpha n^3 + o(n^3)$ & $\alpha \le 0.1654$ & (2019, Shitov~\cite{shitov2019})    
\end{tabular}
\end{center}

The \v{C}ern\'y conjecture \cite{Cerny64} has been confirmed for a variety
of classes of automata, for example: circular automata~\cite{Dubuc96,Dubuc98,Pin78a}, oriented or (generalized) monotonic automata~\cite{AnanichevVolkov03,AnanichevV05,Eppstein90} (even the better bound $n - 1$),
automata with a sink state~\cite{Rys96} (even the better bound $\frac{n(n-1)}{2}$), solvable and commutative automata~\cite{FernauHoffmann19,Rys96,Rystsov97} (even the better bound $n - 1$), weakly acyclic automata~\cite{DBLP:journals/tcs/Ryzhikov19a},
Eulerian automata~\cite{JKari01b}, automata preserving a chain of partial orders~\cite{Volkov09},
automata whose transition monoid contains a $\operatorname{\mathbb{Q} I}$-group~\cite{ArCamSt2017,ArnoldS06},
certain one-cluster automata~\cite{Steinberg11a},
automata that cannot recognize $\{a,b\}^*ab\{a,b\}^*$~\cite{AlmeidaS09}, aperiodic automata~\cite{Trahtman07} (even the better bound $\frac{n(n-1)}{2}$), certain aperiodically $1$-contracting automata~\cite{Don16}
and automata having letters of a certain rank~\cite{BerlinkovS16}.

Additionally, the bound $2(n-1)^2$ has been obtained for the following two classes: automata\todo{hier stand autoamata, also auch in DCFS und journal LATA falsch! punkte am ende fehlte auch.}
with simple idempotents~\cite{Rystsov2000}
and regular automata~\cite{Rystsov1995b,Rystsov1995a}.

For further information, we refer to the mentioned survey articles for details~\cite{San2005,Vol2008}.

Due to its importance, the notion of synchronization has undergone a range of generalizations and variations
for other automata models.
The paper~\cite{FernauGHHVW19} introduced the constrained synchronization problem. 
In this problem, we search for a synchronizing word coming from a specific subset of allowed
input sequences. 
To sketch a few applications:

%


\begin{description}

\item[Reset State.] 

In~\cite{FernauGHHVW19} one motivating example was the demand that a system, or automaton thereof, to synchronize has to first enter a ``directing'' mode, perform a sequence of
operations, and then has to leave this operating mode and enter the ``normal
operating mode'' again. In the most simple case, this constraint
can be modelled by $ab^*a$, which, as it turns out~\cite{FernauGHHVW19},
yields an \NP-complete constrained synchronization problem.
Even more generally, it might be possible that a system -- a remotely controlled
rover on a distant planet, a satellite in orbit, or a lost autonomous vehicle
-- is not allowed to execute all commands in every possible
order, but  certain commands are only allowed in certain order or after
other commands have been executed. All of this imposes constraints
on the possible reset sequences.

\item[Part Orienters.] Suppose parts arrive at a manufacturing site and they
need to be sorted and oriented before assembly. Practical considerations
favor methods which require little or no sensing, employ simple devices,
and are as robust as possible. This can be achieved as follows.
We put parts to be oriented on a conveyor belt which takes them to the assembly
point and let the stream of the parts encounter a series of passive obstacles placed along
the belt. Much research on synchronizing automata was motivated
by this application~\cite{DBLP:journals/algorithmica/ChenI95,Eppstein90,DBLP:journals/trob/ErdmannM88,DBLP:journals/algorithmica/Goldberg93,DBLP:conf/focs/Natarajan86,DBLP:journals/ijrr/Natarajan89,Vol2008}
and I refer to~\cite{Vol2008} for
an illustrative example. 
Now, furthermore, assume the passive components can not be placed at random along
the belt, but have to obey some restrictions, or restrictions
in what order they are allowed to happen. These can be due
to the availability of components, requirements how to lay things out
or physical restrictions. 

\item[Supervisory Control.] The constrained synchronization problem
 can also be viewed of as 
 supervisory control
 of a discrete event system (DES) that is given by an automaton
 and whose event sequence is modelled by a formal language~\cite{DBLP:books/daglib/0034521,RamadgeWonham87,wonham2019}.
 In this framework, a DES has a set of controllable
 and uncontrollable events.
 Dependent
 on the event sequence that occurred so far,
 the supervisor is able to restrict the set of events
 that are possible in the next step, where, however,
 he can only limit the use of controllable events.
 So, if we want to (globally) reset a finite state DES~\cite{Alves2020} under supervisory control, 
 this is equivalent to constrained synchronization problem.

\end{description}

In~\cite{FernauGHHVW19} it was shown that we can realize PSPACE-complete, NP-complete
or polynomial time solvable constrained problems by appropriately choosing a
constraint language. Investigating the reductions from~\cite{FernauGHHVW19}, 
we see that most
reductions yield automata with a sink state, which then must be the unique
synchronizing state. Hence, we can conclude that we can realize these complexities with this type of input automaton.
Contrary, for example, unary automata are synchronizing only if they admit
no non-trivial cycle, i.e., only a single self-loop. In this case, we can easily decide
synchronizability for any constraint language in polynomial time. Hence, for
these simple types of automata, the complexity drops considerably. So, a natural
question is, if we restrict the class of input automata, what complexities are
realizable? Or more precisely:

\begin{quote} 
 What features in the input automata do we need to realize certain complexities?
\end{quote}

In~\cite{DBLP:conf/dlt/Hoffmann21} this question was investigated for weakly acyclic, or partially ordered,
input automata. These are automata where all cycles are trivial, i.e., the only loops
are self-loops. It was shown that in this case, the constrained synchronization problem
is always in \NP\  and, for suitable constraint languages, \NP-complete problems
are realizable.


\subparagraph{Overview and Contribution}
We investigate the constrained synchronization when the input
is restricted to the class of commutative input automata
and to the class of input automata with simple idempotents. Both classes
were investigated previously with respect to the \v{C}ern\'y conjecture~\cite{FernauHoffmann19,Rys96,Rystsov2000} (see the list of the automata classes above for which this conjecture has been confirmed) and with respect
to computational problems of computing a shortest synchronizing word~\cite{Mar2009b}.
We show in Section~\ref{sec:comm_input} that for commutative input automata over an arbitrary alphabet and arbitrary constraint automata,
and in Section~\ref{sec:simpl_idemp} that for input automata with simple idempotents over a binary alphabet
and small constraint automata, the constrained
synchronization problem is always solvable in polynomial time.

Section~\ref{sec:comm_input} splits
into three subsections.
The first, Subsection~\ref{subsec:structure_sync_words}, is concerned with the set of synchronizing words for commutative semi-automata.
In Subsection~\ref{subsec:recognizing_intersection},
we show an auxiliary result that is also of independent
interest, namely that, given $m$ weakly acyclic
and commutative automata, the set of words accepted
by them all is recognizable by a weakly acyclic automaton
computable in polynomial time for a fixed alphabet.
Then, in Subsection~\ref{subsec:poly_alg},
we combine all these results
to show Theorem~\ref{thm:main_theorem_comm_input},
the statement that for commutative input automata
the constrained synchronization problem is tractable.

Section~\ref{sec:simpl_idemp} uses, stated in Proposition~\ref{prop:structure_simp_idem_binary},
that synchronizing automata with simple idempontent over a binary alphabet must have a very specific form.
Then, Theorem~\ref{thm:simple_idempotents_binary_in_P},
uses this result to give a polynomial time algorithm
for every possible constraint automaton
over a binary alphabet with at most three states.

\section{Preliminaries and Some Known Results}
\label{sec:preliminaries}

We assume the reader to have some basic knowledge in computational complexity theory and formal language theory, as contained, e.g., in~\cite{HopUll79}. For instance, we make use of  regular expressions to describe languages.
By $\Sigma$ we denote the \emph{alphabet}, a finite set.
For a word $w \in \Sigma^*$ we denote by $|w|$ its \emph{length},
and, for a symbol $x \in \Sigma$, we write $|w|_x$ to denote the \emph{number of occurrences of $x$}
in the word. We denote the empty word, i.e., the word of length zero, by $\varepsilon$.
We call $u \in \Sigma^*$ a \emph{prefix} of a word $v \in \Sigma^*$
if there exists $w \in \Sigma^*$ such that $v = uw$.
For $U, V \subseteq \Sigma^*$, we set $U\cdot V = UV = \{ uv \mid u \in U, v \in V \}$
and 
$U^0 = \{ \varepsilon \}$, $U^{i+1} = U^i U$, 
and $U^* = \bigcup_{i \ge 0} U^i$ and $U^+ = \bigcup_{i > 0} U^i$.
We also make use of complexity classes like $\PTIME$, $\NP$, or $\PSPACE$.

A \emph{partial deterministic finite automaton (PDFA)} is a tuple $\mathcal A = (\Sigma, Q, \delta, q_0, F)$,
where $\Sigma$ is a finite set of \emph{input symbols},~$Q$ is the finite \emph{state set}, $q_0 \in Q$ the \emph{start state}, $F \subseteq Q$ the \emph{final state set} and $\delta \colon Q\times \Sigma \rightharpoonup Q$ the \emph{partial transition function}.
The \emph{partial transition function} $\delta \colon Q\times \Sigma \rightharpoonup Q$ extends to words from $\Sigma^*$ in the usual way. 
Furthermore, for $S \subseteq Q$ and $w \in \Sigma^*$, we set $\delta(S, w) = \{\,\delta(q, w) \mid \mbox{$\delta(q,w)$ is defined and } q \in S\,\}$.
We call $\mathcal A$ a \emph{complete (deterministic finite) automaton} if~$\delta$ is defined for every $(q,a)\in Q \times \Sigma$.
If $|\Sigma| = 1$, we call $\mathcal A$ a \emph{unary automaton} and
 $L \subseteq \Sigma^*$ is also called a \emph{unary language}.
The set $L(\mathcal A) = \{\, w \in \Sigma^* \mid \delta(q_0, w) \in F\,\}$ denotes the language
\emph{recognized} 
by~$\mathcal A$.

A \emph{deterministic and complete semi-automaton (DCSA)} $\mathcal A = (\Sigma, Q, \delta)$
is a deterministic and complete finite automaton without a specified start state
and with no specified set of final states.
When the context is clear, we call both deterministic finite automata and semi-automata simply \emph{automata}.
Here, when talking about semi-automata, we always mean complete and deterministic
semi-automata, as we do not consider other models of semi-automata.
\todo{das vielleicht rausnehmen?}
Concepts and notions that only rely on the transition structure
carry over from complete automata to semi-automata and vice versa
and we assume, for example, that every notions defined
for semi-automata has also the same meaning for complete automata
with a start state and a set of final states.

Let $\mathcal A = (\Sigma, Q, \delta)$ be a semi-automaton. A maximal subset $S \subseteq Q$
with the property that for every $s, t \in S$
there exists $u \in \Sigma^*$ such that $\delta(s, u) = t$
is called a \emph{strongly connected component} of $\mathcal A$.
We also say that a state $s \in Q$ is \emph{connected} to a state $t \in Q$ (or $t$ is \emph{reachable} from $s$) if there exists $u \in \Sigma^*$ such that $\delta(s, u) = t$.
Viewing the strongly connected components as vertices of a directed graph
with edges being induced by the transitions that connect different
components, we get an acyclic
directed graph.

Let $\mathcal A = (\Sigma, Q, \delta)$ be a semi-automaton.
Then, $\mathcal A$ is called an \emph{automaton with simple idempotents},
if every $a \in \Sigma$ either permutes the states or maps precisely two states to a single state
and every other state to itself. More formally, every $a \in \Sigma$
either (1) permutes the states, i.e., $\delta(Q, a) = Q$,
or (2) it is a \emph{simple idempotent}, i.e., we have
$|\delta(Q, a)| = |Q| - 1$ and $\delta(q, aa) = \delta(q, a)$
for every $q \in Q$.
Letters fulfilling condition (1) are also called \emph{permutational letters}, and letters fulfilling (2)
are called \emph{simple idempotent letters}.

The semi-automaton $\mathcal A$ is called \emph{commutative}, if for all $a,b \in \Sigma$
and $q \in Q$ we have $\delta(q, ab) = \delta(q, ba)$.
The semi-automaton $\mathcal A$ is called \emph{weakly acyclic}, if there exists an ordering
$q_1, q_2, \ldots, q_n$ of its states such that if $\delta(q_i, a) = q_j$
for some letter $a \in \Sigma$, then $i \le j$ (such an ordering is called
a \emph{topological sorting}).
An automaton is weakly acyclic, if $\delta(q, uxv) = q$
for $u, v \in \Sigma^*$ and $x \in \Sigma$
implies $\delta(q, x) = q$, i.e., the only loops in the automaton
graph are self-loops.
This is also equivalent
to the fact that the reachability relation between the
states is a partial order.


A complete automaton $\mathcal A$ is called \emph{synchronizing} if there exists a word $w \in \Sigma^*$ with $|\delta(Q, w)| = 1$. In this case, we call $w$ a \emph{synchronizing word} for $\mathcal A$.
We call a state $q\in Q$ with $\delta(Q, w)=\{q\}$ for some $w\in \Sigma^*$ a \emph{synchronizing state}.
For a semi-automaton (or PDFA) with state set $Q$ and transition function $\delta : Q \times \Sigma \rightharpoonup Q$,
a state $q$ is called a \emph{sink state}, if for all $x \in \Sigma$ we have $\delta(q,x) = q$.
Note that, if a synchronizing automaton has a sink state, then the
synchronizing state is unique and must equal the sink state.

In~\cite{FernauGHHVW19} the \emph{constrained synchronization problem} 
was defined for a fixed PDFA
$\mathcal B = (\Sigma, P, \mu, p_0, F)$. 

\begin{decproblem}\label{def:problem_L-constr_Sync}
  \problemtitle{\cite{FernauGHHVW19}~\textsc{$L(\mathcal B)$-Constr-Sync}}
  \probleminput{DCSA $\mathcal A = (\Sigma, Q, \delta)$.}
  \problemquestion{Is there a synchronizing word $w \in \Sigma^*$ for $\mathcal A$ with  $w \in L(\mathcal B)$?}
\end{decproblem}

The automaton $\mathcal B$ will be called the \emph{constraint automaton}.
If an automaton~$\mathcal A$ is a yes-instance of \textsc{$L(\mathcal B)$-Constr-Sync} we call $\mathcal A$ \emph{synchronizing with respect to~$\mathcal{B}$}. 
Occasionally,
we do not specify $\mathcal{B}$ and rather talk about \textsc{$L$-Constr-Sync}.

Previous
results have shown that unconstrained synchronization is solvable in
polynomial time, and constrained synchronization in polynomial space.

\begin{theorem}[\cite{Vol2008}] \label{thm:unrestricted_sync_poly_time}
	We can decide $\Sigma^*\textsc{-Constr-Sync}$
	in time $O(|\Sigma||Q|^2)$.
\end{theorem}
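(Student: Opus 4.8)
The plan is to reduce the question to a reachability problem in an auxiliary graph on pairs of states. First I would recall the folklore pairwise criterion: a deterministic complete semi-automaton $\mathcal A = (\Sigma, Q, \delta)$ is synchronizing if and only if for every pair $p, q \in Q$ there exists $w \in \Sigma^*$ with $\delta(p,w) = \delta(q,w)$. The ``only if'' direction is immediate, since a synchronizing word merges every pair. For ``if'', I would show by induction on $|S|$ that every nonempty $S \subseteq Q$ can be collapsed: if $|S| \ge 2$, pick distinct $p, q \in S$ and a word $w$ merging them, so $|\delta(S,w)| < |S|$, and iterate; after at most $|Q|-1$ such steps one reaches a singleton, and concatenating the words used yields a synchronizing word for $\mathcal A$.

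Next I would build the \emph{pair graph} $G$ whose vertices are the $\binom{|Q|}{2} + |Q|$ unordered pairs $\{p,q\}$ with $p,q \in Q$ (allowing $p = q$, i.e.\ including the singletons), and which has, for each $a \in \Sigma$, an edge from $\{p,q\}$ to $\{\delta(p,a), \delta(q,a)\}$. By the criterion above, $\mathcal A$ is synchronizing exactly when every vertex of $G$ can reach some singleton vertex. This is decided by a single breadth-first (or depth-first) search in the reverse graph of $G$ started from the set of all singleton vertices, marking everything reachable; $\mathcal A$ is synchronizing iff all vertices get marked.

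For the running time, I would note that $G$ can be constructed explicitly in time and space $O(|\Sigma|\,|Q|^2)$, its reversal is obtained within the same bound, and the backward search visits each vertex and edge once, hence runs in $O(|\Sigma|\,|Q|^2)$; this matches the claimed bound. The only point that needs a little care is the bookkeeping: one must enumerate, for each target pair, its $a$-predecessors without exceeding the edge budget, which is handled by materializing the reverse adjacency lists once. There is no substantial obstacle here beyond getting this complexity accounting right and writing the pairwise-merging induction cleanly.
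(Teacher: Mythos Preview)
Your argument is correct and is exactly the standard one: the pairwise merging criterion together with a backward reachability search in the pair automaton, with the $O(|\Sigma|\,|Q|^2)$ bound coming from the number of edges in that graph. The paper does not give its own proof of this theorem but simply cites it from Volkov's survey~\cite{Vol2008}, where precisely this argument appears; so there is nothing to compare beyond noting that your proposal reproduces the classical proof.
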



\begin{theorem}[\cite{FernauGHHVW19}]
  \label{thm:L-contr-sync-PSPACE}
  For any constraint automaton $\mathcal B = (\Sigma, P, \mu, p_0, F)$
  the problem \textsc{$L(\mathcal B)$-Constr-Sync} is in $\PSPACE$.
\end{theorem}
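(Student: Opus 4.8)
The plan is to run the subset construction on the input automaton in product with the fixed constraint automaton, and then to observe that the resulting exponentially large reachability instance can be decided in polynomial space.

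First I would build the following product structure $\mathcal C$. Its states are the pairs $(S, p)$ with $\emptyset \neq S \subseteq Q$ and $p \in P$, its designated start state is $(Q, p_0)$, and for $a \in \Sigma$ there is a transition $(S,p) \xrightarrow{\,a\,} (\delta(S,a), \mu(p,a))$ whenever $\mu(p,a)$ is defined (note that $\delta(S,a)$ is always defined and nonempty, since $\mathcal A$ is complete). Call a pair $(S,p)$ a \emph{target} if $|S| = 1$ and $p \in F$. By construction, for a word $w = a_1 \cdots a_k$ the $w$-labelled path in $\mathcal C$ starting in $(Q, p_0)$ exists if and only if $\mu(p_0, w)$ is defined, i.e.\ $w \in L(\mathcal B)$, and in that case it ends in $(\delta(Q, w), \mu(p_0, w))$. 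Hence $w$ is a synchronizing word for $\mathcal A$ with $w \in L(\mathcal B)$ exactly when the $w$-labelled path from $(Q, p_0)$ ends in a target pair, so $\mathcal A$ is a yes-instance of \textsc{$L(\mathcal B)$-Constr-Sync} if and only if some target pair is reachable from $(Q, p_0)$ in the directed graph underlying $\mathcal C$.

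Second, I would record the relevant size bounds: $\mathcal C$ has at most $(2^{|Q|}-1)\cdot|P|$ states, each state $(S,p)$ admits a description of size $O(|Q| + \log|P|)$, and given $(S,p)$ and a letter $a$ its $a$-successor is computable in polynomial time. Reachability in such an implicitly represented graph is decidable by a nondeterministic algorithm that stores only a current state of $\mathcal C$ together with a step counter bounded by the number of states of $\mathcal C$ (so $O(|Q| + \log|P|)$ bits suffice for the counter): starting from $(Q, p_0)$, repeatedly guess a letter $a \in \Sigma$, replace the current pair by its $a$-successor (rejecting this branch if the successor is undefined or the counter overflows), and accept as soon as the current pair is a target. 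This runs in nondeterministic polynomial space, so the problem is in $\NPSPACE$, and by Savitch's theorem $\NPSPACE = \PSPACE$, which yields the claim.

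The only mild subtlety, rather than a genuine obstacle, is keeping the bookkeeping within polynomial space: one must not write down the guessed word but only maintain the current pair and the counter. The counter is essential because a shortest qualifying word can have length up to the number of states of $\mathcal C$, hence exponential in $|Q|$, so a naive ``guess the whole word'' argument would not by itself stay in polynomial space. Everything else is routine product/subset-construction reasoning combined with the standard implicit-graph reachability technique.
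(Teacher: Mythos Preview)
The paper does not actually prove this theorem; it merely quotes it from~\cite{FernauGHHVW19} as a known result, so there is nothing to compare against. Your argument is the standard one (and essentially what the cited reference does): take the product of the power-set automaton of $\mathcal A$ with the constraint automaton $\mathcal B$, reduce the question to reachability of a pair $(\{q\},p)$ with $p\in F$ from $(Q,p_0)$ in this exponential-size implicit graph, and decide that reachability in $\NPSPACE=\PSPACE$ via Savitch's theorem while storing only the current pair and a counter.

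One small slip worth fixing: you write that the $w$-labelled path in $\mathcal C$ exists ``if and only if $\mu(p_0,w)$ is defined, i.e.\ $w\in L(\mathcal B)$''. The ``i.e.'' is wrong: $\mu(p_0,w)$ being defined only means the partial run does not get stuck, not that it ends in $F$. This is harmless for the overall argument, since your definition of \emph{target} correctly requires $p\in F$ and the final equivalence you state is right, but the sentence as written is false.
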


In~\cite{FernauGHHVW19}, a complete analysis of the complexity landscape when the constraint language is given by small partial automata was done. It is natural to extend this result to other language classes.

\begin{theorem}[\cite{FernauGHHVW19}]
\label{thm:classification_MFCS_paper}
 Let $\mathcal B = (\Sigma, P, \mu, p_0, F)$
 be a PDFA.
 If $|P|\le 1$ or $|P| = 2$ and $|\Sigma|\le 2$, then $L(\mathcal B)\textsc{-Constr-Sync} \in \PTIME$.
 For $|P| = 2$ with a ternary alphabet $\Sigma = \{a,b,c\}$, up to symmetry by renaming of the letters,
 $L(\mathcal B)\textsc{-Constr-Sync}$
 is $\PSPACE$-complete precisely in the following cases for $L(\mathcal B)$:
 $$
  \begin{array}{llll}
    a(b+c)^*        & (a+b+c)(a+b)^*  & (a+b)(a+c)^* & (a+b)^*c \\
    (a+b)^*ca^*     & (a+b)^*c(a+b)^* & (a+b)^*cc^*  & a^*b(a+c)^* \\
    a^*(b+c)(a+b)^* & a^*b(b+c)^*     & (a+b)^*c(b+c)^* & a^*(b+c)(b+c)^*
  \end{array}
 $$
 and polynomial time solvable in all other cases.
\end{theorem}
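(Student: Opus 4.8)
The plan is to treat the two regimes of the statement separately and, within each, to reduce everything to a finite case distinction over the (up to renaming of letters and of states) finitely many partial automata of the allowed size. For $|P|\le 1$, and for $|P|=2$ with $|\Sigma|\le 2$, one lists all such PDFAs, reads off the recognized language $L(\mathcal B)$, and checks that each falls into a polynomially decidable family. The workhorse is the observation that, for a sub-alphabet $\Gamma\subseteq\Sigma$, an input DCSA $\mathcal A=(\Sigma,Q,\delta)$ has a synchronizing word in $\Gamma^*$ iff the sub-semi-automaton obtained by keeping only the letters of $\Gamma$ is synchronizing, which is decidable in polynomial time by Theorem~\ref{thm:unrestricted_sync_poly_time}. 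Combined with the fact that after a fixed prefix $u$ the relevant state set $\delta(Q,u)$ can be tracked, this handles all languages of the shape $u\Gamma^{*}$, $\Gamma^{*}v$, $u\Gamma^{*}v$, and finite unions of these, which exhausts the one-state and two-state binary cases. For the ternary alphabet with $|P|=2$, membership in $\PSPACE$ for every case is Theorem~\ref{thm:L-contr-sync-PSPACE}, so only the tractability-versus-hardness dichotomy remains; the tractable ternary languages (e.g.\ $a^{*}b^{*}c^{*}$, $(a+b+c)^{*}$, $a^{*}(b+c)^{*}$, $\ldots$) are again dispatched by the sub-alphabet/residual trick, with a few needing one extra observation such as ``a given letter occurs at most once, so guess its position and solve two polynomial sub-instances''.

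For each of the twelve listed patterns I would give a polynomial reduction from a standard \PSPACE-complete automaton problem such as \DFAINE: given DFAs $A_{1},\ldots,A_{k}$ over $\{0,1\}$, decide whether $\bigcap_{i}L(A_{i})\neq\emptyset$. The input semi-automaton is built as a disjoint union of completed copies of the $A_{i}$, together with a few auxiliary states, and is wired so that the only way to collapse $Q$ to a single state is to first apply the distinguished ``marker'' letter of the constraint (the role of $c$ in $(a+b)^{*}c(a+b)^{*}$, of the final $c$ in $(a+b)^{*}c$, and so on) to line up all copies, and then to feed a common word over the remaining two letters encoding a binary string accepted by every $A_{i}$ simultaneously, with accepting states of the $A_{i}$ merged so that exactly such a word synchronizes. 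The constraint language is precisely what forces this schedule -- marker at the front, in the middle, or at the end, possibly followed by a one-letter loop -- so the twelve patterns correspond to twelve minor variants of one gadget. In every case one must also verify the converse, that a synchronizing word of the prescribed shape yields a common accepted word, which is where the bookkeeping on the auxiliary, sink-like states is delicate, since the input is required to be a \emph{complete} and deterministic semi-automaton with no genuine sink available a priori.

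The main obstacle I expect is not any single reduction but exhaustiveness: one must be sure the symmetry reduction leaves exactly the claimed finite list, that every non-listed language really does admit a polynomial algorithm (several resist the generic sub-alphabet argument and need ad hoc reasoning), and that the single gadget template can be instantiated, with a matching converse, for each of the twelve hard patterns. A secondary subtlety is engineering the needed sink-like behaviour inside the gadget without breaking completeness or determinism of the constructed input semi-automaton.
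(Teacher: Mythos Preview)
This theorem is not proved in the present paper at all: it is quoted verbatim from~\cite{FernauGHHVW19} as a known result in the preliminaries, so there is no ``paper's own proof'' to compare against. Your outline is a reasonable high-level sketch of how such a classification is obtained (and is broadly in the spirit of the arguments in~\cite{FernauGHHVW19}: sub-alphabet/residual arguments for the tractable cases, and reductions from \textsc{DFA-Intersection-Nonemptiness} for the hard ones), but as it stands it is a plan, not a proof.

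If your goal were actually to reprove the theorem, the exhaustiveness issue you flag is real and is the hard part: with $|P|=2$ and $|\Sigma|=3$ there are on the order of a thousand PDFAs up to isomorphism, and the symmetry reduction still leaves several dozen non-equivalent languages, each of which must be sorted. Your generic ``$u\Gamma^{*}v$'' template does not cover all the tractable two-state ternary languages (some require reductions between constraint problems rather than a direct algorithm), and your single gadget-with-twelve-variants description glosses over that the marker letter plays structurally different roles across the list---in $(a+b)^{*}c$ it terminates the run, in $a(b+c)^{*}$ it initializes it, in $(a+b)^{*}c(a+b)^{*}$ it separates two free phases---so the correctness converse (synchronizing word $\Rightarrow$ common accepted string) needs a genuinely different argument in each family, not just ``bookkeeping''. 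None of this is wrong, but a referee would not accept the sketch as a proof; the content is precisely in the case analysis you defer.
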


For $|P| = 3$ and $|\Sigma| = 2$,
the following is known: In~\cite{FernauGHHVW19} it has been shown
that $(ab^*a)\textsc{-Constr-Sync}$ is \NP-complete for general input automata. In~\cite[Theorem 33]{FernauGHHVW19} it was shown that 
$(b(aa+ba)^*)\textsc{-Constr-Sync}$
is \PSPACE-complete for general input automata. Further, it can be shown that for the following constraint languages the constrained problem
is \PSPACE-complete: $b^*a(a+ba)^*$, $a(b+ab)^* + b(bb^*a)^*$,
which are all acceptable by a $3$-state PDFA over a binary alphabet. So, with Theorem~\ref{thm:classification_MFCS_paper}, these are the smallest possible constraint automata over a binary alphabet giving \PSPACE-complete problems.

\todo{In der Tabelle bei simple idempotents tauscht zweimal P auf, nicht gut!}

\begin{table}[t]
    \centering
\begin{tabular}{llllllll} 
 Input Aut. Type       & Complexity Class & Hardness                        & Reference  \\ \hline
  General Automata          & \PSPACE          & \PSPACE-hard for $a(b+c)^*$ & \cite{FernauGHHVW19}  \\
 With Sink State           & \PSPACE          & \PSPACE-hard for $a(b+c)^*$ & \cite{FernauGHHVW19}  \\
 Weakly Acyclic            & \NP              & \NP-hard for $a(b+c)^*$     & \cite{DBLP:conf/dlt/Hoffmann21} \\ 
 Simple Idempotents        & \PTIME\ for $|\Sigma| = 2, |P| \le 3$                     & - & Theorem~\ref{thm:simple_idempotents_binary_in_P}  \\
 Commutative               & \PTIME           & - & Theorem~\ref{thm:main_theorem_comm_input}
\end{tabular}
    \caption{Overview of known result of the complexity landscape of $L(\mathcal B)\textsc{-Constr-Sync}$
    with $\mathcal B = (\Sigma, P, \mu, p_0, F)$ 
    when restricted to certain input automata.
    Constraint languages giving intractable problems are written next to the hardness claim. For a binary alphabet, the $3$-state PDFA language $b(aa+ba)^*$ gives an \PSPACE-complete problem for general input automata.}
    \label{fig:tab_overview_input_aut}
\end{table}

For an overview of the results for different classes of input automata, see Table~\ref{fig:tab_overview_input_aut}.

\section{Synchronizing Commutative Semi-Automata under Arbitrary Regular Constraints}
\label{sec:comm_input}

Here, we show that for commutative input automata
and an arbitrary regular constraint language,
the constrained synchronization problem is always solvable in polynomial time.

Subsection~\ref{subsec:structure_sync_words} is concerned with the set of synchronizing words for commutative semi-automata.
The main result of this subsection is Proposition~\ref{prop:com_waa_same_sync_words}, stating
that the set of synchronizing words for a commutative automaton
can be represented by a weakly acyclic and commutative automaton
computable in polynomial time.
Then, in Subsection~\ref{subsec:recognizing_intersection},
we show an auxiliary result that is also of independent
interest, namely that, given $m$ weakly acyclic
and commutative automata, the set of words accepted
by them all is recognizable by a weakly acyclic automaton
computable in polynomial time for a fixed alphabet.
Then, in Subsection~\ref{subsec:poly_alg},
we combine all these results
to show Theorem~\ref{thm:main_theorem_comm_input},
the statement that for commutative input automata
the constrained synchronization problem is tractable.

\subsection{The Structure of the Set of Synchronizing Words for Commutative Automata}
\label{subsec:structure_sync_words}

In commutative semi-automata, a synchronizing state must be a sink state, a property
not true for general semi-automata.

\begin{lemma}[\cite{FernauHoffmann19}] 
\label{lem:sync_state_is_sink}
 Let $\mathcal A = (\Sigma, Q, \delta)$
 be a commutative semi-automaton.
 Then, the synchronizing state must
 be a sink state (and hence is unique).
\end{lemma}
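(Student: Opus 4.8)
The plan is to take a synchronizing word $w$ for $\mathcal{A}$, so that $\delta(Q,w) = \{q\}$ for the synchronizing state $q$, and to show directly that every letter fixes $q$. The first, elementary, observation is that since $q \in Q$ we have $\delta(q,w) \in \delta(Q,w) = \{q\}$, i.e.\ $w$ itself fixes $q$. This will be the only place where the hypothesis ``$w$ synchronizes $\mathcal{A}$'' is used beyond the trivial membership $\delta(Q,w) \subseteq \{q\}$.

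The core of the argument is then a short commutation computation. Fix an arbitrary letter $a \in \Sigma$ and read the word $aw$ from state $q$ in two ways. Reading the $a$ first, $\delta(q, aw) = \delta(\delta(q,a), w) \in \delta(Q, w) = \{q\}$, hence $\delta(q, aw) = q$. Reading it in Parikh-equivalent order instead, commutativity lets me move the leading $a$ past every letter of $w$, giving $\delta(q, aw) = \delta(q, wa) = \delta(\delta(q,w), a) = \delta(q, a)$, where the last equality uses $\delta(q,w) = q$. Combining the two computations yields $\delta(q,a) = q$, and since $a$ was arbitrary, $q$ is a sink state.

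The only ingredient that is not completely immediate is the passage from commutativity stated on single letters, $\delta(p, xy) = \delta(p, yx)$, to the word-level identity $\delta(q, aw) = \delta(q, wa)$, and I expect this to be the main (though still routine) obstacle. It follows by induction on $|w|$, repeatedly swapping adjacent letters; equivalently, it is the standard fact that in a commutative semi-automaton $\delta(p,u)$ depends only on the Parikh image of $u$. Finally, once $q$ is known to be a sink state, uniqueness is free: as already observed in the preliminaries, an automaton possessing a sink state has that sink as its unique synchronizing state, since any synchronizing word must in particular fix the sink. Hence $q$ is the unique synchronizing state, completing the proof.
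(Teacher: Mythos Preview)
Your argument is correct: from $\delta(Q,w)=\{q\}$ you get $\delta(q,w)=q$, and then for any $a\in\Sigma$ the two readings of $aw$ combined with commutativity give $\delta(q,a)=\delta(q,wa)=\delta(q,aw)\in\delta(Q,w)=\{q\}$, so $q$ is a sink. The paper itself does not supply a proof of this lemma; it merely quotes the statement from~\cite{FernauHoffmann19}, so there is nothing to compare your approach against here.
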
 

The next lemma is important in
the proof of Proposition~\ref{prop:com_waa_same_sync_words}.

\begin{lemmarep} 
\label{lem:sync_comm_aut_SCC}
 Let $\mathcal A = (\Sigma, Q, \delta)$ be a commutative semi-automaton, $S \subseteq Q$ a strongly connected component and $u \in \Sigma^*$.
 Then the states in $\delta(S, u)$
 are pairwise connected.
\end{lemmarep}
\begin{proof}
 Let $t_1, t_2 \in \delta(S, u)$.
 Then $t_1 = \delta(s_1, u)$ and $t_2 = \delta(s_2, u)$
 for some $s_1, s_2 \in S$.
 As $S$ is a strongly connected component, we find $v,w \in \Sigma^*$
 such that $s_1 = \delta(s_2, v)$
 and $s_2 = \delta(s_1, w)$. 
 But then, by commutativity, 
 $t_1 = \delta(s_1, u) = \delta(\delta(s_2, v), u) = \delta(s_2, vu) = \delta(s_2, uv) = \delta(\delta(s_2, u),v) = \delta(t_2, v)$
 and similarly $t_2 = \delta(t_1, w)$.
\end{proof}

\begin{toappendix} 
\begin{corollaryrep}
\label{cor:two_SCC}
 Let $\mathcal A = (\Sigma, Q, \delta)$ be a commutative semi-automaton,
 $S,T \subseteq Q$ be two (not necessarily distinct) strongly connected components
 and $u \in \Sigma$. Then either $\delta(S, u) \cap T = \emptyset$
 or $\delta(S, u) \subseteq T$.
\end{corollaryrep}
\begin{proof}
 If $\delta(S, u) \cap T \ne \emptyset$,
 then, by Lemma~\ref{lem:sync_comm_aut_SCC}
 and the maximality of $T$, $\delta(S, u) \subseteq T$.
\end{proof}
\end{toappendix}

For commutative automata, the set of synchronizing
words is represented by a weakly acyclic automata that
is constructed out of the strongly connected components
and computable in polynomial time. 

\begin{proposition}
\label{prop:com_waa_same_sync_words}
 Let $\mathcal A = (\Sigma, Q, \delta)$
 be a synchronizing commutative semi-automaton
 with $n$ states.
 Then, there exists a weakly acyclic commutative
 semi-automaton with at most $n$ states and the same set of synchronizing
 words computable in polynomial time
 for a fixed alphabet.
\end{proposition}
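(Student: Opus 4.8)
The plan is to pass from $\mathcal A$ to the quotient semi-automaton on strongly connected components. Let $C_1,\dots,C_k$ be the strongly connected components of $\mathcal A$. By Corollary~\ref{cor:two_SCC}, for each letter $a\in\Sigma$ and each component $C_i$ the image $\delta(C_i,a)$ lies entirely inside a single component; call it $C_{j}$, and define $\bar\delta(C_i,a)=C_j$. This is well-defined and deterministic, and since the component DAG is acyclic, the resulting semi-automaton $\bar{\mathcal A}=(\Sigma,\{C_1,\dots,C_k\},\bar\delta)$ is weakly acyclic (any loop $\bar\delta(C_i,uxv)=C_i$ forces all intermediate components to coincide with $C_i$, hence $\bar\delta(C_i,x)=C_i$). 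It is commutative because $\mathcal A$ is: $\bar\delta(C_i,ab)$ is the component containing $\delta(C_i,ab)=\delta(C_i,ba)$, which is $\bar\delta(C_i,ba)$. Clearly $k\le n$, and the components and the maps $\bar\delta$ are computable in polynomial time (standard SCC decomposition plus one evaluation of $\delta$ per letter and component). So $\bar{\mathcal A}$ has all the required structural properties; the content is in showing it has the same synchronizing words.

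The key equivalence to establish is: $w\in\Sigma^*$ is synchronizing for $\mathcal A$ if and only if it is synchronizing for $\bar{\mathcal A}$. One direction is easy: if $\delta(Q,w)=\{q\}$ then, since every component maps under $w$ into the component of its image and $Q$ is the union of the $C_i$, all the $\bar\delta(C_i,w)$ equal the component containing $q$, so $\bar\delta(\{C_1,\dots,C_k\},w)$ is a singleton. For the converse, suppose $\bar\delta(\{C_1,\dots,C_k\},w)=\{C\}$, i.e.\ $\delta(q,w)\in C$ for every $q\in Q$. By Lemma~\ref{lem:sync_state_is_sink}, $\mathcal A$ has a unique synchronizing state, which is a sink state; a sink state is its own strongly connected component, say $C_0=\{s\}$ with $s$ the sink. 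Since $\mathcal A$ is synchronizing, there is some word $v$ with $\delta(Q,v)=\{s\}$, so in $\bar{\mathcal A}$ the component $C_0$ is reachable from every component and is itself a sink of $\bar{\mathcal A}$; hence in a weakly acyclic semi-automaton $C_0$ is the unique sink reachable from everything, which forces any synchronizing word of $\bar{\mathcal A}$ to synchronize to $C_0$, i.e.\ $C=C_0=\{s\}$. Therefore $\delta(q,w)=s$ for all $q\in Q$, so $w$ synchronizes $\mathcal A$.

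The main obstacle is precisely the converse direction: collapsing to components a priori loses information, and one must rule out the possibility that $w$ "synchronizes" $\bar{\mathcal A}$ into a nontrivial component $C$ without synchronizing $\mathcal A$. The argument above handles this by using Lemma~\ref{lem:sync_state_is_sink} to pin down that the only component into which a synchronizing word of $\bar{\mathcal A}$ can land is the singleton $\{s\}$ given by the sink state of $\mathcal A$ — here it is essential that $\mathcal A$ is \emph{synchronizing} (so that such an $s$ exists and $C_0$ is reachable from all components) and that $\bar{\mathcal A}$ is weakly acyclic (so there is a unique terminal component reachable from everything). I would write Corollary~\ref{cor:two_SCC} as the technical backbone for well-definedness of $\bar\delta$, note the polynomial-time bound, and then give the two-direction equivalence; once that is in place, $\bar{\mathcal A}$ is the desired weakly acyclic commutative semi-automaton with at most $n$ states and the same set of synchronizing words.
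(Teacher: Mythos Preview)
Your proposal is correct and follows essentially the same approach as the paper: both construct the quotient semi-automaton on strongly connected components (the paper phrases it via a congruence and a transversal, you phrase it via the component DAG, but these are the same object), verify well-definedness using the fact that each component maps into a single component, and then prove the equivalence of synchronizing words by exploiting that the sink state of $\mathcal A$ forms a singleton component. Your converse argument is slightly more streamlined than the paper's prefix-by-prefix congruence tracking, since you observe directly that $C_0=\{s\}$ is a sink in $\bar{\mathcal A}$ and hence must be the target of any synchronizing word there; this is the same content with less bookkeeping.
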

\begin{proof}
 Define an equivalence relation on $Q$
 by setting two states $q, q' \in Q$ to be equivalent
 $q \sim q'$
 iff they are contained in the same strongly connected
 component. By Lemma~\ref{lem:sync_comm_aut_SCC}
 this is in fact a congruence relation, i.e.,
 if $q \sim q'$, then $\delta(q, u) \sim \delta(q', u)$
 for every $u \in \Sigma^*$.

 Let $q_1, \ldots, q_m \in Q$
 be a transversal of the equivalence classes, i.e.,
 we pick precisely one element from each class.
 Define $\mathcal C = (\Sigma, S, \mu)$
 with $S = \{q_1, \ldots, q_m\}$
 and $\mu(q_i, x) = q_j$
 iff $\delta(q_i, x) \sim q_j$.
 As we have a congruence relation, we get the same
 automaton for every choice of transversal $q_1, \ldots, q_m$
 (in fact, we only note in passing that 
 $\mathcal C$ is a homomorphic image of $\mathcal A$
 and corresponds to quotiening the automaton
 by the introduced congruence relation).
 Next, we argue that the sets of synchronizing
 words of both automata coincide.
 
 Let $s_f$ be the synchronizing state of $\mathcal A$.
 By Lemma~\ref{lem:sync_state_is_sink} it is a sink state,
 and so $\{s_f\}$ is a strongly connected component
 and we can deduce $s_f \in S$.
 Suppose $u \in \Sigma^*$ is a synchronizing
 word of $\mathcal A$.
 So, for each state $q_i$ from the transversal,
 we have $\delta(q_i, u) = s_f$,
 which implies, as, inductively, for every prefix $v$
 of $u$ we have $\delta(q_i, v) \sim \mu(q_i, v)$,
 that $\mu(q_i, u) = s_f$
 and $u$ synchronizes $\mathcal C$.
 Conversely, suppose $u \in \Sigma^*$
 synchronizes $\mathcal C$
 and let $q \in Q$.
 Then $q \sim q_i$ for some $q_i \in S$.
 Again, for every prefix $v$ of $u$,
 we have $\delta(q, v) \sim \mu(q_i, v)$,
 as is easy to see by the definition of $\mathcal C$
 and as $\sim$ is a congruence relation.
 So, $\delta(q, u) \sim s_f$, which implies,
 as $\{s_f\}$ is a strongly connected component,
 that $\delta(q,u) = s_f$.
\end{proof}

\begin{remark} 
 See Figure~\ref{fig:ex_comm_aut}
 for a synchronizing commutative automaton 
 and a weakly acyclic automaton constructed as in the proof
 of Proposition~\ref{prop:com_waa_same_sync_words}
 having the same set of synchronizing words.
 Note that the construction can actually be performed
 for any commutative automaton, even a non-synchronizing
 one as shown in Figure~\ref{fig:ex_comm_aut2}.

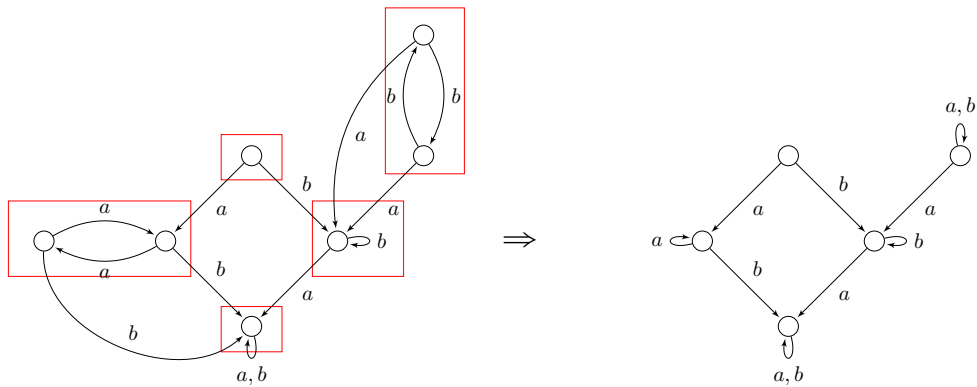
\begin{figure}[htb]
     \centering
    \scalebox{.8}{
   
  \begin{tikzpicture}[>=latex',shorten >=1pt,node distance=2cm,on grid,auto]
 \tikzset{every state/.style={minimum size=1pt}}
 
   \node[state] (2) {};
   \node[state] (3) [below left of=2] {};
   \node[state] (4) [below right of=2] {};
   \node[state] (5) [left of=3] {};
   \node[state] (6) [below right of=3] {};
   \node[state] (8) [above right of=4] {};
   \node[state] (9) [above of=8] {};
   
   \path[->] 
             (2) edge node {$a$}   (3)
             (3) edge [bend left] node  {$a$}   (5)
             (5) edge [bend left]  node {$a$}   (3)
             (3) edge node {$b$} (6)
             (6) edge [loop below] node {$a,b$} (6)
             (5) edge [bend right=70]  node  {$b$} (6);
             
   \path[->] (2) edge node {$b$} (4)
             (4) edge [loop right] node {$b$} (4)
             (4) edge node {$a$} (6);
             
   \path[->] (8) edge [bend left] node {$b$} (9)
             (9) edge [bend left] node {$b$} (8)
             (8) edge node {$a$} (4)
             (9) edge [bend right] node {$a$} (4);

   
   \node (P)  [right = 3cm of 4] {\LARGE $\Rightarrow$};
   \node[state] (S1) [right = 3cm of P] {};
   \node[state] (S2) [above right of=S1] {};
   \node[state] (S3) [below right of=S2] {};
   \node[state] (S4) [below left  of=S3] {};
   \node[state] (S5) [above right of=S3] {};
   
   \path[->] (S2) edge node {$a$} (S1)
             (S2) edge node {$b$} (S3)
             (S3) edge node {$a$} (S4)
             (S1) edge node {$b$} (S4)
             (S5) edge node {$a$} (S3);
   
   \path[->] (S1) edge [loop left] node {$a$} (S1)
             (S3) edge [loop right] node {$b$} (S3)
             (S5) edge [loop above] node {$a,b$} (S5)
             (S4) edge [loop below] node {$a,b$} (S4);
             
        \begin{pgfonlayer}{background}
                \draw [draw=red] (-4,-0.75) rectangle (-1,-2);
                \draw [draw=red] (0.5,-2.5) rectangle (-0.5,-3.25);
                \draw [draw=red] (0.5,-0.4) rectangle (-0.5,0.35);
                \draw [draw=red] (1,-0.75) rectangle (2.5,-2);
                \draw [draw=red] (2.2,-0.3) rectangle (3.5,2.45);
        \end{pgfonlayer}
                
\end{tikzpicture}
}
  \caption{A synchronizing commutative automaton
    and the weakly acyclic automaton from the proof
    of Proposition~\ref{prop:com_waa_same_sync_words}
    with the same set of synchronizing words.
    A shortest synchronizing word is $baa$. The strongly connected components
    in the original automaton have been framed by red boxes.}
  \label{fig:ex_comm_aut}
\end{figure}

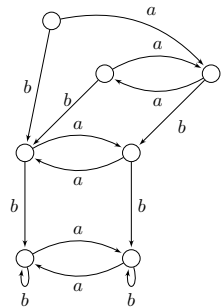
\begin{figure}[htb]
     \centering
    \scalebox{.7}{
   
  \begin{tikzpicture}[>=latex',shorten >=1pt,node distance=2cm,on grid,auto]
 \tikzset{every state/.style={minimum size=1pt}}
    \node[state] (10) at (11,2) {};
   \node[state] (11) at (14,1) {}; 
   \node[state] (12) at (10.5,-0.5) {}; 
   \node[state] (13) [below of=12] {};
   \node[state] (14) [right of=13] {};
   \node[state] (15) [left of=11] {};
   \node[state] (16) [right of=12] {};
   
   \path[->] (10) edge [bend left] node {$a$} (11)
             (11) edge node {$b$} (16)
             (10) edge [left] node {$b$} (12)
             (12) edge [left] node {$b$} (13)
             (13) edge [loop below] node {$b$} (12)
             (14) edge [loop below] node {$b$} (14)
             (13) edge [bend left] node {$a$} (14)
             (14) edge [bend left] node {$a$} (13);
    
 
   \path[->] (11) edge [bend left] node {$a$} (15)
             (15) edge [bend left] node {$a$} (11);
             
   \path[->] (16) edge [bend left] node {$a$} (12)
             (12) edge [bend left] node {$a$} (16);
             
   \path[->] (15) edge [left,pos=.3] node {$b$} (12);
   \path[->] (16) edge node {$b$} (14);
   
\end{tikzpicture}
}
   
   \caption{A commutative automaton that is not synchronizing.}
  \label{fig:ex_comm_aut2}
\end{figure}

\end{remark}

With the method of proof from Proposition~\ref{prop:com_waa_same_sync_words}, we can slightly improve the running time stated in Theorem~\ref{thm:unrestricted_sync_poly_time}
for commutative input semi-automata.

\begin{corollaryrep} 
\label{cor:sync_comm_aut_decision_problem}
 Let $\mathcal A = (\Sigma, Q, \delta)$ be a commutative semi-automaton.
 Then it can be decided in time 
 $O(|Q| + |\Sigma||Q|)$
 if $\mathcal A$ is synchronizing.
\end{corollaryrep}
\begin{proof}
 For a given directed acyclic graph with $n$ nodes and $m$ edges,
 a topological sorting\todo{nochmal definieren was das ist?}
 can be performed in time $O(n + m)$, see~\cite{DBLP:books/daglib/0023376}.
 Hence, if we consider the directed acyclic graph resulting from 
 the strongly connected components~\cite{DBLP:books/daglib/0023376}
 of the automaton graph of $\mathcal A$,
 we can topologically sort it in time $O(|Q| + |\Sigma||Q|)$,
 as we have at most $|Q|$ strongly connected components
 and $|\Sigma||Q|$ many edges among them, as this is an upper bound for the edges $\mathcal A$,
 as each state contributes $|\Sigma|$ many edges.
 The strongly connected components themselves can also be computed
 in time $O(|Q| + |\Sigma||Q|)$ with a similar argument, see~\cite{DBLP:books/daglib/0023376}.

 \todo{das vielleicht nur in proof sketch bringen.}
 Another argument computes
 first the strongly connected components,
 from which the automaton $\mathcal C$
 from the proof of Proposition~\ref{prop:com_waa_same_sync_words}
 can be derived, and then uses the fact~\cite[Corollary 6]{DBLP:conf/dlt/Hoffmann21a} that
 for weakly acyclic automata
 we can decide synchronizability
 in $O(m + |\Sigma|m)$, where $m$
 denotes the number of strongly connected components
 of $\mathcal A$.
\end{proof}

Let us note the following consequence
of Lemma~\ref{lem:aut_for_sync_words}
and the bound $n - 1$
for a shortest synchronizing word in weakly acyclic
automata~\cite[Proposition 1]{DBLP:journals/tcs/Ryzhikov19a}
and a bound for the shortest synchronizing word
with respect to a constraint~\cite[Proposition 7]{DBLP:conf/dlt/Hoffmann21a}.
This gives an alternative proof of the tight bound $n - 1$
for commutative automata from~\cite{FernauHoffmann19,Rys96}.

\begin{corollary}
 If $\mathcal A = (\Sigma, Q, \delta)$ is a synchronizing commutative semi-automaton
 with $n$ states,
 then there exist a synchronizing word of length
 at most $n - 1$ and, for any constraint PDFA $\mathcal B$,
 there exists a synchronizing word in $L(\mathcal B)$
 of length at most $|P| \binom{n}{2}$.
\end{corollary}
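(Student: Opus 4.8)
The plan is to derive this corollary directly from Proposition~\ref{prop:com_waa_same_sync_words} together with the two quoted length bounds for weakly acyclic automata. First I would apply Proposition~\ref{prop:com_waa_same_sync_words} to the synchronizing commutative semi-automaton $\mathcal A$ to obtain a weakly acyclic commutative semi-automaton $\mathcal C$ with $m$ states, $m \le n$, having exactly the same set of synchronizing words as $\mathcal A$. Since the two automata have the same synchronizing words, a word $w$ synchronizes $\mathcal A$ if and only if it synchronizes $\mathcal C$; in particular $\mathcal A$ is synchronizing with respect to a constraint PDFA $\mathcal B$ if and only if $\mathcal C$ is, and a witness for one is a witness for the other.

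For the unconstrained claim I would invoke the bound $n-1$ on the length of a shortest synchronizing word in a weakly acyclic automaton~\cite[Proposition 1]{DBLP:journals/tcs/Ryzhikov19a}. Applied to $\mathcal C$, which is weakly acyclic with $m \le n$ states and (being equivalent to $\mathcal A$) synchronizing, it yields a synchronizing word of length at most $m-1 \le n-1$; by the previous paragraph this word also synchronizes $\mathcal A$.

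For the constrained claim I would use the bound on the length of a shortest synchronizing word with respect to a constraint for weakly acyclic input automata~\cite[Proposition 7]{DBLP:conf/dlt/Hoffmann21a}. If $\mathcal A$, equivalently $\mathcal C$, is synchronizing with respect to $\mathcal B$ with state set $P$, then this bound applied to $\mathcal C$ produces a word $w \in L(\mathcal B)$ synchronizing $\mathcal C$ of length at most $|P|\binom{m}{2} \le |P|\binom{n}{2}$, using $m \le n$; since $w$ then also synchronizes $\mathcal A$ and lies in $L(\mathcal B)$, it witnesses the claim for $\mathcal A$.

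I do not expect a genuine obstacle here: the only points needing attention are transporting the synchronizing word, and its membership in $L(\mathcal B)$, back from $\mathcal C$ to $\mathcal A$ via the equality of the sets of synchronizing words, and replacing the bounds stated in terms of $m$ by bounds in terms of $n$ using $m \le n$. As noted after the statement, this argument also re-derives the tight bound $n-1$ for commutative automata from~\cite{FernauHoffmann19,Rys96}.
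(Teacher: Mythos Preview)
Your proposal is correct and follows essentially the same approach as the paper: reduce to a weakly acyclic automaton with at most $n$ states having the same set of synchronizing words, then apply the cited length bounds for weakly acyclic automata. The paper's proof is terser but identical in substance; note that although the paper references Lemma~\ref{lem:aut_for_sync_words}, the property actually used---``the set of synchronizing words equals that of a weakly acyclic automaton'' with at most $n$ states---is precisely the content of Proposition~\ref{prop:com_waa_same_sync_words}, which is what you invoke.
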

\begin{proof}
 By Lemma~\ref{lem:aut_for_sync_words}
 the set of synchronizing words
 equals the set of synchronizing words
 of a weakly acyclic automaton, for which we have the known
 bounds.
\end{proof}

In fact, with a little more work and generalizing  Lemma~\ref{lem:aut_for_sync_words}, using~\cite[Proposition 1]{DBLP:journals/tcs/Ryzhikov19a},
we can show the stronger statement that for commutative $\mathcal A$, if there
exists a word $w \in \Sigma^*$
with $|\delta(Q, w)| = r$, then
there exists one of length at most $n - r$.

\subsection{Recognizing the Intersection of Weakly Acyclic Commutative Automata}
\label{subsec:recognizing_intersection}

\begin{proposition}
\label{prop:intersection_waa_com}
 Let $\mathcal A_i = (\Sigma, Q, \delta, q_i, F_i)$,
 $i \in \{1,\ldots,m\}$,
 be weakly acyclic and commutative automata
 with at most $n$ states.
 Then, $\bigcap_{i=1}^m L(\mathcal A_i)$
 is recognizable by a weakly acyclic
 commutative automaton 
 of size $n^{|\Sigma|}$
 computable in polynomial-time
 for a fixed alphabet.
\end{proposition}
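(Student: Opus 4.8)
The plan is to exploit the fact that in a commutative automaton the state reached from a fixed start state under a word $w$ depends only on the Parikh vector of $w$, and that weak acyclicity makes each coordinate of this vector irrelevant once it exceeds $n-1$. First I would establish a saturation observation: if an automaton is weakly acyclic with at most $n$ states, then for every state $q$ and letter $a$ we have $\delta(q, a^{n-1}) = \delta(q, a^{t})$ for all $t \ge n-1$, since the sequence $q, \delta(q,a), \delta(q,a^2), \dots$ contains at most $n$ distinct states and its first repetition must be a self-loop by weak acyclicity. Writing $\Sigma = \{a_1, \dots, a_k\}$ and $c_j(w) = \min(|w|_{a_j}, n-1)$, commutativity then gives $\delta(q, w) = \delta\bigl(q,\ a_1^{c_1(w)} \cdots a_k^{c_k(w)}\bigr)$, so membership of $w$ in any of the $L(\mathcal A_i)$ depends only on the \emph{capped Parikh vector} $\mathbf{c}(w) = (c_1(w), \dots, c_k(w)) \in \{0, 1, \dots, n-1\}^{k}$.

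Next I would construct the automaton $\mathcal C = (\Sigma, S, \mu, s_0, F)$ with $S = \{0,1,\dots,n-1\}^{|\Sigma|}$, start state $s_0 = (0,\dots,0)$, and $\mu(v, a_j) = v'$ where $v'_\ell = v_\ell$ for $\ell \ne j$ and $v'_j = \min(v_j+1, n-1)$. This has exactly $n^{|\Sigma|}$ states, it is commutative because capped increments on distinct coordinates commute, and it is weakly acyclic because every coordinate is non-decreasing along any path, so any closed path traverses only self-loops. By construction $\mu(s_0, w) = \mathbf{c}(w)$. I then set $F$ to contain exactly those $v \in \{0,\dots,n-1\}^{|\Sigma|}$ with $\delta(q_i, a_1^{v_1}\cdots a_k^{v_k}) \in F_i$ for every $i$. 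Combining with the saturation observation and the commutativity of each $\mathcal A_i$, a word $w$ lies in $L(\mathcal C)$ iff $\mathbf{c}(w) \in F$ iff $\delta(q_i, w) \in F_i$ for all $i$, that is, iff $w \in \bigcap_{i=1}^m L(\mathcal A_i)$, as required.

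For the running time with a fixed alphabet: for each $i$ the map $v \mapsto \delta(q_i, a_1^{v_1}\cdots a_k^{v_k})$ can be tabulated by dynamic programming over the grid $\{0,\dots,n-1\}^{|\Sigma|}$ in order of increasing coordinate sum, each entry obtained from a predecessor by one transition lookup; well-definedness (independence of the chosen predecessor) is again commutativity. This costs $O(m\, n^{|\Sigma|})$, computing $F$ is an $m$-fold check at each of the $n^{|\Sigma|}$ grid points, and writing down $\mu$ is $O(|\Sigma|\, n^{|\Sigma|})$ — all polynomial in $n$ and $m$ for fixed $|\Sigma|$. I do not expect a genuine obstacle here; the one thing to resist is the naive product construction, which would use $\prod_{i} |Q_i|$, up to $n^{m}$, states. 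The content of the proposition is precisely that commutativity together with weak acyclicity collapses all the relevant information into the bounded grid $\{0,\dots,n-1\}^{|\Sigma|}$, whose size does not depend on $m$ at all.
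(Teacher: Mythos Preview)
Your proposal is correct and follows essentially the same approach as the paper: both reduce membership in each $L(\mathcal A_i)$ to the capped Parikh vector in $\{0,\dots,n-1\}^{|\Sigma|}$ via commutativity and weak acyclicity, then build the grid automaton with capped-increment transitions and take as final states exactly the vectors satisfying all $m$ membership conditions. Your presentation is in fact slightly more streamlined than the paper's (you define $F$ directly rather than going through intermediate sets $E_i$ and their preimages under $\psi_n$), and your running-time analysis via dynamic programming over the grid is a bit more explicit, but the underlying argument is identical.
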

\begin{proof}
 Let $\Sigma = \{a_1, \ldots, a_k\}$.
 Define the threshold counting function $\psi_n \colon \Sigma^* \to \{0,1,\ldots,n-1\}^k$
 by 
 $
 \psi_n(u) = (\min\{n-1,|u|_{a_1}\}, \ldots, \min\{n-1,|u|_{a_k}\}).
 $
 
 Let $i \in \{1,\ldots,m\}$. Suppose $u \in \Sigma^*$.
 If $|u|_{a_j} \ge n$
 for some $j \in \{1,\ldots,k\}$,
 then $\mathcal A_i$
 must traverse at least one self-loop labeled
 by $a_j$ 
 when reading $u$, as $\mathcal A_i$
 is weakly acyclic.
 So, by not traversing these self-loops an appropriate
 number of times,
 we find a word $u' \in \Sigma^*$
 with $|u'|_{a_j} = \min\{n-1, |u|_{a_j}\}$,
 which implies $\psi_n(u) = \psi_n(u')$,
 and $\delta_i(q_i, u) = \delta_i(q_i, u')$.

 
 For $a \in \Sigma$, set $a^{\le n-1} = \{\varepsilon, a, \ldots, a^{n-1} \}$.
 Define $E_i = \{ \psi_n(u) \mid u \in a_1^{\le n-1} \cdots a_k^{\le n-1} \cap L(\mathcal A_i) \}$.
 Let $u \in L(\mathcal A_i)$,
 by the above and using that $\mathcal A_i$
 is commutative, 
 we can deduce $\psi_n(u) \in E_i$.
 If $u \in \psi_n^{-1}(E_i)$,
 then there exists $v = a_1^{c_1} \cdots a_k^{c_k}$
 with $c_j \in \{0,1,\ldots,n-1\}$
 for all $j \in \{1,\ldots,k\}$
 and $\psi_n(u) = \psi_n(v)$.
 As $\psi_n(v) \in E_i$,
 we have $v \in L(\mathcal A_i)$.
 As before, there exists $u' \in \Sigma^*$
 with $|u'|_{a_j} = \min\{ n-1, |u|_{a_j} \}$
 for all $j \in \{1,\ldots,k\}$
 such that $\delta_i(q_i, u) = \delta_i(q_i, u')$
 and $\psi_n(u) = \psi_n(u')$.
 Hence, $\psi_n(u') = \psi_n(v)$,
 and as $|u|_{a_j} < n$
 for all $j \in \{1,\ldots,k\}$
 and by commutativity,
 this implies $u' \in L(\mathcal A_i)$,
 which, furthermore,
 implies $u \in L(\mathcal A_i)$,
 as both end up in the same state of $\mathcal A_i$.
 Summarizing, we have shown
 \[
  u \in L(\mathcal A_i) 
  \Leftrightarrow u \in \psi_n^{-1}(E_i).
 \]
 As $i \in \{1,\ldots,m\}$ was chosen arbitrarily,  $\bigcap_{i=1}^m L(\mathcal A_i)
 = \bigcap_{i=1}^m \psi_n^{-1}(E_i)
 = \psi_n^{-1}(\bigcap_{i=1}^m E_i)$.

 Finally, a language of the form $\psi_n^{-1}(E)$
 is recognizable by a weakly acyclic commutative
 automaton: set $\mathcal A = (\Sigma, Q, \delta, q_0, E)$
 with $Q = \{0, 1, \ldots, n-1\}$
 and $\delta((s_1, \ldots, s_k), a_j)
 = (s_1, \ldots, s_{j-1}, \min\{ n - 1, s_j + 1 \}, s_{j+1}, \ldots, s_k)$ and $q_0 = (0,\ldots,0)$. It is obvious
 that $\mathcal A$ is commutative and $L(\mathcal A) = \psi_n^{-1}(E)$ as $\mathcal A$ essentially
 implements the threshold counting expressed by $\psi_n$.
 
 As $a_1^{\le n - 1} \cdots a_k^{\le n - 1}$
 contains $n^k$ words, the sets $E_i$
 and their intersection
 can be computed in polynomial time.
 Also, the automaton recognizing
 $\psi_n^{-1}(\bigcap_{i=1}^m E_i)$
 is computable in polynomial time.
\end{proof}

\begin{remark}
 With Proposition~\ref{prop:intersection_waa_com},
 we can deduce that we can decide in 
 time $O(n^{f(|\Sigma|)})$,
 for some computable function $f$,
 if the intersection of the languages
 recognized by $m$
 weakly acyclic commutative automata
 is non-empty. Hence, this
 problem is in \XP\  for these types
 of automata and parameterized 
 by the size of the alphabet (see~\cite{FluGro2006} for an introduction
 to parameterized complexity theory).
 It can be shown that the languages
 recognized by these types of automata
 recognize star-free languages~\cite{Arrighi2021}.
 Note that in~\cite{Arrighi2021},
 the mentioned conclusion has been
 improved to an \XP\ result for so called
 totally star-free non-deterministic automata
 recognizing commutative languages.
 We refer to the mentioned paper for details.
 Contrary, for general commutative, even unary,
 automata
 this problem is $\NP$-complete~\cite{FernauHoffmannWehar2021},
 and, further, for fixed alphabet sizes,
 $W[1]$-complete with the number of input
 automata as parameter, see~\cite{FernauHoffmannWehar2021}.
\end{remark}

\subsection{A Polynomial-Time Algorithm for the Constrained Synchronization Problem with Commutative Input Semi-Automata}
\label{subsec:poly_alg}

\begin{lemma}
\label{lem:aut_for_sync_words}
 Let $\mathcal A = (\Sigma, Q, \delta)$
 be a commutative semi-automaton with $n$ states.
 Then, the set of synchronizing words
 is recognizable by an automaton
 of size $n^{|\Sigma|}$
 computable in polynomial time 
 for a fixed alphabet.
\end{lemma}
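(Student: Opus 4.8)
The plan is to combine the two structural results already established: Proposition~\ref{prop:com_waa_same_sync_words}, which replaces $\mathcal A$ by a weakly acyclic commutative semi-automaton $\mathcal C$ on at most $n$ states with the same set of synchronizing words, and Proposition~\ref{prop:intersection_waa_com}, which shows that an intersection of weakly acyclic commutative automata is recognized by a weakly acyclic commutative automaton of size $n^{|\Sigma|}$ computable in polynomial time for fixed alphabet. The missing link is to express the set of synchronizing words of $\mathcal C$ as such an intersection.

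First I would handle the trivial case: by Corollary~\ref{cor:sync_comm_aut_decision_problem} we can test in polynomial time whether $\mathcal A$ is synchronizing at all; if it is not, the language of synchronizing words is $\emptyset$, which is trivially recognized by a small (weakly acyclic, commutative) automaton. So assume $\mathcal A$, and hence $\mathcal C = (\Sigma, S, \mu)$, is synchronizing, and let $s_f$ be its (unique, by Lemma~\ref{lem:sync_state_is_sink}) synchronizing state, which is a sink. A word $w$ synchronizes $\mathcal C$ iff $\mu(s, w) = s_f$ for every $s \in S$. Thus the set of synchronizing words is exactly $\bigcap_{s \in S} L(\mathcal C_s)$, where $\mathcal C_s = (\Sigma, S, \mu, s, \{s_f\})$ is $\mathcal C$ made into a DFA with start state $s$ and single accepting state $s_f$. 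Each $\mathcal C_s$ is weakly acyclic (it has the same transition structure as $\mathcal C$) and commutative, and there are at most $n$ of them, each of size at most $n$.

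Now I would simply invoke Proposition~\ref{prop:intersection_waa_com} with these $m \le n$ automata $\mathcal C_s$: it yields a weakly acyclic commutative automaton of size $n^{|\Sigma|}$ recognizing $\bigcap_{s\in S} L(\mathcal C_s)$, which is the set of synchronizing words of $\mathcal C$, hence of $\mathcal A$. Everything is polynomial-time computable for fixed $|\Sigma|$: $\mathcal C$ is built in polynomial time by Proposition~\ref{prop:com_waa_same_sync_words}, the $\mathcal C_s$ are read off immediately, and the intersection automaton is constructed in polynomial time by Proposition~\ref{prop:intersection_waa_com}.

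I do not expect a real obstacle here; the lemma is essentially a packaging of the preceding two propositions. The only point requiring a line of care is checking that $\mathcal C_s$ inherits weak acyclicity and commutativity from $\mathcal C$ (immediate, since both notions depend only on the transition function, which is unchanged) and that making $s_f$ the sole accepting state is the correct acceptance condition (correct, because $\mathcal C$ synchronizes to $s_f$ and, $s_f$ being a sink, $\mu(s,w)=s_f$ is equivalent to $w$ driving $s$ into the synchronizing state). A minor bookkeeping remark: the size bound $n^{|\Sigma|}$ stated in Proposition~\ref{prop:intersection_waa_com} is in terms of the state bound $n$ of the input automata, which here is the number of states of $\mathcal C$, at most the number of states of $\mathcal A$, so the claimed bound is exactly what we want.
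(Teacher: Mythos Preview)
Your proposal is correct and follows essentially the same route as the paper's own proof: test synchronizability, pass to the weakly acyclic commutative quotient $\mathcal C$ via Proposition~\ref{prop:com_waa_same_sync_words}, write the synchronizing words as $\bigcap_{s\in S} L(\mathcal C_{s,\{s_f\}})$, and apply Proposition~\ref{prop:intersection_waa_com}. Your added remarks on why $\mathcal C_s$ inherits weak acyclicity and commutativity and on the size bookkeeping are welcome but not needed beyond what the paper already assumes.
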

\begin{proof}

 First, we can test in polynomial time if $\mathcal A$
 is synchronizing. If not, an automaton
 with a single state and empty set of final states
 recognizes the empty set.
 Otherwise, by Proposition~\ref{prop:com_waa_same_sync_words}
 we can compute a weakly acyclic automaton $\mathcal C = (\Sigma, S, \mu)$
 having the same set of synchronizing
 words in polynomial time.
 By Lemma~\ref{lem:sync_state_is_sink}, there exists a unique
 synchronizing state $s_f \in S$. \todo{diese notation nutze ich acuh später, am anfang einführen?}
 For $q \in S$, let $\mathcal C_{q,\{s_f\}} = (\Sigma, S, \mu, q, \{s_f\})$ be the automaton $\mathcal C$ but with 
 start state $q$ and set of final states $\{s_f\}$. 
 Then, the set of synchronizing words of 
 $\mathcal C$ is
 \[
  \bigcap_{q \in Q} L(\mathcal C_{q, \{s_f\}}).
 \]
 By Proposition~\ref{prop:intersection_waa_com}, we
 can compute in polynomial time
 a weakly acyclic commtuative automaton of size $n^{|\Sigma|}$
 recognizing
 the above set.
\end{proof}

\begin{toappendix}
\begin{remark}
 In the proof of Lemma~\ref{lem:aut_for_sync_words},
 as $\mathcal C = (\Sigma, S, \mu)$ is weakly acyclic,
 we can show that
 \[
  \bigcap_{q \in Q} L(\mathcal C_{q, \{s_f\}})
   = \bigcap_{\substack{q \in Q \\ \forall p \in Q \setminus\{q\} \  \forall u \in \Sigma^* : \mu(p, u) \ne q}} L(\mathcal C_{q, \{s_f\}}),
 \]
 which improves the running time of the algorithm.
 In terms of the partial order given on the states
 by the reachability relation,\todo{diese irgendwo definieren}
 this is implied as every letter either moves to 
 a larger state or induces a self-loop,
 which implies that we can take the above
 intersection over the minimal states with respect
 to the partial order.
\end{remark}
\end{toappendix}


Now, combining the above results we can prove
our main result concerning the constrained
synchronization problem.

\begin{theorem}
\label{thm:main_theorem_comm_input}
 Let $L \subseteq \Sigma^*$ be regular.
 Then, for a commutative input semi-automaton $\mathcal A$
 with $n$ states,
 the problem if $\mathcal A$ admits a synchronizing
 word in $L$ is solvable in polynomial time.
\end{theorem}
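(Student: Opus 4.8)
The plan is to reduce the problem to a reachability check in a polynomial-size product automaton. First I would invoke Lemma~\ref{lem:aut_for_sync_words}: since $\mathcal A$ is commutative with $n$ states and the alphabet $\Sigma$ is fixed, in polynomial time we obtain an automaton $\mathcal S = (\Sigma, S, \delta_S, s_0, F_S)$ with $|S| \le n^{|\Sigma|}$ that recognizes exactly the set of synchronizing words of $\mathcal A$. (Recall that the construction behind that lemma first tests synchronizability via Corollary~\ref{cor:sync_comm_aut_decision_problem}, returning a one-state automaton for $\emptyset$ if $\mathcal A$ is not synchronizing, so we may simply take $L(\mathcal S)$ to be the set of synchronizing words of $\mathcal A$ in all cases.)

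Next, since $L$ is a fixed regular language, I fix once and for all a PDFA (or DFA) $\mathcal B = (\Sigma, P, \mu, p_0, F)$ with $L(\mathcal B) = L$; its size $|P|$ is a constant, independent of the input $\mathcal A$. I then form the standard product automaton $\mathcal S \times \mathcal B$: its state set is (the defined part of) $S \times P$, its start state is $(s_0, p_0)$, its transitions run both components in parallel, and its accepting states are the pairs in $F_S \times F$. By construction $L(\mathcal S \times \mathcal B) = L(\mathcal S) \cap L = \mathrm{Sync}(\mathcal A) \cap L$, so $\mathcal A$ is a yes-instance of $L\textsc{-Constr-Sync}$ if and only if $L(\mathcal S \times \mathcal B) \neq \emptyset$.

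Finally, non-emptiness of $L(\mathcal S \times \mathcal B)$ is equivalent to the existence of a path from $(s_0,p_0)$ to some accepting state in the underlying transition graph, which is decidable by a graph search in time linear in the number of states and transitions of the product, i.e. in time $O(|\Sigma|\, n^{|\Sigma|}\, |P|)$. As $|\Sigma|$ and $|P|$ are constants, this is polynomial in $n$, which finishes the argument.

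The only step requiring real care — and the place where commutativity is actually used — is the one packaged into Lemma~\ref{lem:aut_for_sync_words}, resting on Proposition~\ref{prop:com_waa_same_sync_words} (passing to the weakly acyclic quotient by the strongly-connected-component congruence) and Proposition~\ref{prop:intersection_waa_com} (the threshold-counting representation of intersections of weakly acyclic commutative automata). For a general semi-automaton the set of synchronizing words need not admit any polynomial-size representation, and indeed constrained synchronization can be $\PSPACE$-complete; for commutative automata those two propositions keep the size at $n^{|\Sigma|}$, which is polynomial precisely because the alphabet is treated as fixed. Once that representation is available, the product-and-reachability construction above is routine, so I expect no further obstacle.
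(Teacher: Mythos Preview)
Your proposal is correct and follows essentially the same route as the paper: invoke Lemma~\ref{lem:aut_for_sync_words} to obtain a polynomial-size automaton for the synchronizing words, take the product with a fixed automaton for $L$, and test non-emptiness by reachability. Your added remarks on where commutativity enters (via Propositions~\ref{prop:com_waa_same_sync_words} and~\ref{prop:intersection_waa_com}) are accurate and match the paper's development.
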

\begin{proof}
 By Lemma~\ref{lem:aut_for_sync_words}, we
 can compute in polynomial time an
 automaton $\mathcal C$ of size $n^{|\Sigma|}$
 recognizing the set
 of synchronizing words of $\mathcal A$.
 By the product automaton construction~\cite{HopUll79},
 using an automaton for $L$,
 we can compute in polynomial
 time a recognizing automaton
 for the intersection $L(\mathcal C) \cap L$.
 Then, checking that the resulting automaton
 for the intersection recognizes a non-empty
 language can also be done in polynomial
 time~\cite{HopUll79}.
\end{proof}

The degree of the polynomial measuring the running time depends on the alphabet size. But note that a fixed
constraint automaton also fixes the alphabet, hence this parameter is not allowed to vary in the input semi-automata.

\begin{toappendix} 
\todo{noch richtig einordnen in appendix}
In this section, we take a closer look at the structure
of commutative semi-automata which admit a synchronizing word.
We show that the connectivity of a strongly connected component
is preserved under the action of a letter, that a letter
either permutes the states of some component, or maps every state
into another component, and lastly that a letter, once it permutes
a strongly connected component, it must also permute every component reachable
from the one under consideration. Intuitively, this means a letter
either moves forward between the components, or it has to stop. Then, we use these properties to give a simple criterion
for general synchronizability of a commutative semi-automaton.
\end{toappendix}

\begin{toappendix}
\todo{Ist das nicht vorheriges Corollar mit $T = S$?}

\begin{lemmarep}
\label{lem:maps_outside_or_permutes}
 Let $\mathcal A = (\Sigma, Q, \delta)$ be a commutative semi-automaton
 and $S \subseteq Q$ be a strongly connected component. 
 Then, each letter $x \in \Sigma$ either permutes all states in $S$,
 or maps every state in $S$ to some state outside of $S$, i.e.,
 we either have $\delta(S, x) = S$ or $\delta(S, x) \cap S = \emptyset$.
\end{lemmarep}
\begin{proof}
 Suppose $\delta(S, x) \cap S \ne \emptyset$.
 Let $t \in S$. As $S$ is a strongly connected component,
 we find $u \in \Sigma^*$ such that $t \in \delta(\delta(S, x) \cap S, u) \subseteq \delta(S, u)$.
 Hence, $\delta(S, u) \cap S \ne \emptyset$.
 So, by Corollary~\ref{cor:two_SCC}, $\delta(S,u) \subseteq S$.
 By commutativity, $t \in \delta(S, xu) = \delta(S, ux) = \delta(\delta(S,u),x) \subseteq \delta(S,x)$.
 Hence, $S \subseteq \delta(S, x)$.
 Lastly, we can either note, as $S$ is finite and $|\delta(S,x)| \le |S|$, that
 the previous inclusion implies $S = \delta(S, x)$, or we can reason as before, i.e., using 
 Corollary~\ref{cor:two_SCC} to deduce $\delta(S, x) \subseteq S$. Hence $S = \delta(S, x)$, 
 i.e., $x$ permutes the states in~$S$.
\end{proof}


\begin{lemmarep}
\label{lem:whole_words}
 Let $\mathcal A = (\Sigma, Q, \delta)$ be a commutative semi-automaton, $S \subseteq Q$ be a strongly connected component and
 $u,v \in \Sigma^*$.
 If $\delta(S, uv) \subseteq  \delta(S,u)$,
 then $\delta(S,uv) = \delta(S,u)$.
\end{lemmarep}
\begin{proof}
 By Lemma~\ref{lem:sync_comm_aut_SCC}, we find a strongly connected component $T\subseteq Q$ containing
 $\delta(S,u)$. Then, with the assumption,
 $\delta(T,v)\cap T \ne \emptyset$.
 But this implies $\delta(T, v) = T$,
 as by Lemma~\ref{lem:maps_outside_or_permutes},
 if we write $v = x_1 \cdots x_n$ with $x_i \in\Sigma$, $i \in \{1,\ldots,n\}$,
 the first letter $x_1$ has to permute $T$, for if it does not,
 it will map inside another component. But then, 
 as the strongly connected components form an acyclic graph,
 it can never come back to $T$.
 So, $\delta(T,x_1 x_2 \dots x_n)= \delta(T,x_2\cdots x_n)$
 and with a similar argument, $x_2$
 has to permute $T$ and so on. Hence $v$
 is a concatenation of permutations on $T$,
 hence permutes $T$ itself.
 But then, $v$ acts injective
 on the subset $\delta(S,u) \subseteq T$,
 and so $\delta(S, uv) \subseteq  \delta(S,u)$
 implies $\delta(S,uv) = \delta(S,u)$.
\end{proof}

\begin{lemmarep}
\label{lem:letter_maps_further_or_stucks_for_rest}
 Let $\mathcal A = (\Sigma, Q, \delta)$ be a commutative semi-automaton
 and $S, T \subseteq Q$ be two strongly connected components.
 Let $x \in \Sigma$ and suppose we have $u \in (\Sigma \setminus \{x\})^*$
 such that $\delta(S, x) = S$ and $\delta(S, u) \subseteq T$. 
 Then $\delta(T, x) = T$. 
\end{lemmarep}
\begin{proof} 
 We have $\delta(S, u) = \delta(S, xu) = \delta(S, ux) = \delta(\delta(S, u), x) \subseteq \delta(T, x)$.
 So, $T \cap \delta(T, x) \ne \emptyset$,
 which implies, by Lemma~\ref{lem:maps_outside_or_permutes},
 that $\delta(T, x) = T$.
\end{proof}

By Lemma~\ref{lem:sync_comm_aut_SCC}, 
Lemma~\ref{lem:maps_outside_or_permutes} and Lemma~\ref{lem:letter_maps_further_or_stucks_for_rest},
we already know much about the structure of commutative semi-automata.
We know that any letter either maps completely outside of any strongly connected component
or permutes the states within this component,
and if it maps outside, all target states lie within a common strongly connected
component. 
Also, we know, if we topologically sort the strongly connected
components, that along each path, either a letter always maps everything 
to some strongly connected component strictly greater in the linear ordering
given by the topological sorting, or it permutes the states in this components, but
then it will also do in every component reachable from the one under consideration.
Examples of commutative semi-automata
are presented in Figure~\ref{fig:ex_comm_aut}.
\end{toappendix}

\begin{toappendix}

%
%
Next, we give an easy criterion when a commutative semi-automaton
is synchronizing.


\begin{proposition}
\label{prop:if_sync_form_SCCs}
 Let $\mathcal A = (\Sigma, Q, \delta)$ be a commutative semi-automaton
 and $S_1, \ldots, S_n \subseteq Q$
 be any topological sorting of the strongly connected
 components of $\mathcal A$.
 Then $\mathcal A$ is synchronizing if and only if
 $S_n$ is a singleton set and reachable from every other component.
\end{proposition}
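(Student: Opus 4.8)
The plan is to characterize synchronizability via the structure of the strongly connected components (SCCs), using the lemmas just developed. First I would prove the easy direction: if $\mathcal A$ is synchronizing, then by Lemma~\ref{lem:sync_state_is_sink} the synchronizing state $s_f$ is a sink state, so $\{s_f\}$ is by itself a strongly connected component, and since every state is mapped to $s_f$ by a synchronizing word, $\{s_f\}$ is reachable from every state, hence from every SCC. It remains to see that $\{s_f\}$ is the \emph{last} component $S_n$ in any topological sorting: since $s_f$ is a sink, no edge leaves $\{s_f\}$, so in the acyclic graph on the SCCs it is a sink vertex, and a sink vertex that is reachable from all others must be the unique maximal element, hence appears last in any topological sorting. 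So $S_n = \{s_f\}$ is a singleton reachable from every other component.

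For the converse, assume $S_n = \{s_f\}$ is a singleton reachable from every other SCC. I would argue that some word maps all of $Q$ into $\{s_f\}$. The key point, which follows from Lemma~\ref{lem:maps_outside_or_permutes} together with the fact that the SCCs form a DAG, is that a letter restricted to any SCC either permutes it or maps it entirely out of it; and once it maps out, it lands inside a single strictly larger SCC (Corollary~\ref{cor:two_SCC}). Since $S_n$ is reachable from each $S_i$, for each $i$ there is a word $w_i$ with $\delta(S_i, w_i) \subseteq S_n = \{s_f\}$. The obstacle is that the $w_i$ need not be compatible: applying $w_1$ might move other states into SCCs from which $w_2$ no longer drives them to $s_f$. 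To handle this I would process the SCCs in reverse topological order and build the synchronizing word incrementally: maintain a word $v$ such that $\delta(S_j, v) = \{s_f\}$ for all $j$ above the current threshold, then extend $v$ by a word that pushes the next SCC down to $s_f$ while, crucially, not disturbing the components already collapsed. The latter is guaranteed because $s_f$ is a sink, so any further letters fix $s_f$; and because $S_n$ is reachable from $S_i$ "through" the DAG, the connecting word can be chosen to only involve transitions among components $\le$-related to $S_i$, leaving strictly higher but incomparable components untouched — or, more simply, once a component's image is the singleton $\{s_f\}$, it stays there no matter what is appended. Iterating over all $n$ components yields a word $v$ with $\delta(Q, v) = \{s_f\}$, so $\mathcal A$ is synchronizing.

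The main obstacle, as indicated, is the bookkeeping in the converse direction: making precise that collapsing one SCC to $s_f$ does not "un-collapse" previously handled SCCs. The clean way to dispatch this is to note that $s_f$ being a sink means $\delta(\{s_f\}, u) = \{s_f\}$ for all $u$, so the set of states already driven into $\{s_f\}$ is monotone non-decreasing as we append letters; combined with the existence, for each remaining SCC $S_i$, of a word $w_i$ with $\delta(S_i, w_i) \subseteq \{s_f\}$ (which exists since $s_f$ is reachable from every state of $S_i$), an easy induction on the number of not-yet-collapsed SCCs finishes the argument. One should also observe that if there is more than one SCC that is maximal in the DAG, or if the maximal one is not a singleton, then no sink state exists, so by Lemma~\ref{lem:sync_state_is_sink} the automaton cannot be synchronizing — which gives an alternate, slicker treatment of the forward direction.
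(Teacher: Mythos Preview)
Your forward direction is correct and essentially the paper's. For the converse, there is a real gap in your inductive step, and the paper takes a different (shorter) route.

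\textbf{The gap.} You correctly flag the obstacle that the $w_i$ need not be compatible, but your remedy only covers the \emph{already-collapsed} components. When you ``extend $v$ by a word that pushes the next SCC $S_k$ down to $s_f$'', you are no longer starting from $S_k$ but from $\delta(S_k,v)$, and your pre-computed $w_k$ (defined by $\delta(S_k,w_k)=\{s_f\}$) need not send $\delta(S_k,v)$ to $s_f$. Your parenthetical justification ``which exists since $s_f$ is reachable from every state of $S_i$'' does not close this either: in a general automaton, $s_f$ being reachable from every state of a set does \emph{not} yield a single word mapping the whole set to $s_f$ --- that is exactly what synchronizability asserts. The gap is repairable using the lemmas you already cite: by Lemma~\ref{lem:sync_comm_aut_SCC} the set $\delta(S_k,v)$ lies inside a single SCC $S_m$, and then Corollary~\ref{cor:two_SCC} (applied with $T=\{s_f\}$) shows that any word witnessing reachability of $s_f$ from one state of $S_m$ sends all of $S_m$ to $\{s_f\}$. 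With this said explicitly, your induction goes through.

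\textbf{Comparison with the paper.} The paper avoids the induction entirely. Having words $u_i$ with $\delta(S_i,u_i)=\{s_f\}$, it sets $u=u_1\cdots u_{n-1}$ and uses commutativity of $\mathcal A$ directly at the level of words: for each $i$,
\[
\delta(S_i,u)=\delta(S_i,\,u_i u_1\cdots u_{i-1}u_{i+1}\cdots u_{n-1})=\delta(\{s_f\},\,u_1\cdots u_{i-1}u_{i+1}\cdots u_{n-1})=\{s_f\},
\]
since $s_f$ is a sink. Your scheme uses commutativity only indirectly through the SCC lemmas; the paper's one-line rearrangement is what commutativity really buys here.
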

\begin{proof}
 Assume $\mathcal A$ is synchronizing with synchronizing state $s_f$.
 As $s_f$ is reachable from every state and a sink state by
 Lemma~\ref{lem:sync_state_is_sink}, we must have $S_{n} = \{s_f\}$
 and $S_n$ is reachable from every other component.
 Conversely, assume $S_n = \{s_f\}$ and we have words $u_1, \ldots, u_{n-1}$
 such that $\delta(S_i, u_i) = \{s_f\}$ for $i \in \{1,\ldots, n-1\}$.
 As $S_n$ is the last component in a topological sorting, $\delta(S_n, x) \subseteq S_n$
 for any $x \in \Sigma$. So, $s_f$ is a sink state.
 Set $u = u_1 \cdots u_{n-1}$.
 Then, for any $S_i$, $i \in \{1,\ldots,n-1\}$,
 we have, by commutativity and as $s_f$ is a sink state, 
 \begin{align*}
   \delta(S_i, u) & = \delta(S_i,u_1 \cdots u_{n-1}) \\
                  & = \delta(S_i, u_i u_1 \cdots u_{i-1}u_{i+1}\cdots u_{n-1}) \\
                  & = \delta(\delta(S_i,u_i),u_1 \cdots u_{i-1}u_{i+1}\cdots u_{n-1}) \\
                  & = \delta(\{s_f\}, u_1 \cdots u_{i-1}u_{i+1}\cdots u_{n-1}) \\
                  & = \{s_f\}.
 \end{align*} 
 So $u$ is a synchronizing word and $s_f$ a synchronizing state.
\end{proof}

\begin{remark}
 The structure results derived here generalize the structure results
 for the extremal commutative semi-automata considered in~\cite{FernauHoffmann19}, i.e.,
 those for which a shortest synchronizing word has maximal length.
     
     
\end{remark}
\end{toappendix}

\section{Synchronizing Automata with Simple Idempotents over a Binary Alphabet}
\label{sec:simpl_idemp}

Here, we show that for input automata with simple idempotents over a binary alphabet
and a constraint given by a PDFA with at most three states, the constrained synchronization problem
is always solvable in polynomial time. Note that, as written at the end of Section~\ref{sec:preliminaries}, the smallest constraint languages
giving \PSPACE-complete problems are given by $3$-state automata
over a binary alphabet. But, as shown here, if we only allow automata with simple idempotents as input, the problem remains tractable in these cases.

Intuitively, by applying an idempotent letter, we can map at most two states to a single state. Hence,
to synchronize an $n$-state automaton with simple idempotents, we have to apply at least $n - 1$
times an idempotent letter. This is the content of the next lemma.

\begin{lemmarep}
\label{lem:each_sync_word_contains_a}
 Let $\mathcal A = (\Sigma, Q, \delta)$ be a semi-automaton with $n > 0$ states
 with simple idempotent and let $\Gamma \subseteq \Sigma$
 be the set of all idempotent letters that are not permutational letters.
 Suppose $w \in \Sigma^*$ is a synchronizing word for $\mathcal A$.
 Then, $\sum_{a \in \Gamma} |w|_a \ge n - 1$.
\end{lemmarep}
\begin{proof}
 As every letter $a \in \Gamma$ maps at most two states to a single state
 in each application, 
 we have $|S| - 1 \le |\delta(S, a)| \le |S|$
 for each $S \subseteq Q$.
 All other letters permute the states, hence the cardinality
 of each subset is invariant for them, i.e., $a \notin \Gamma$
 implies $|S| = |\delta(S, a)|$ for $S \subseteq Q$.
 This yields, inductively, $|S| - \sum_{a \in \Gamma} |w|_a \le |\delta(S, w)|$
 for $w \in \Sigma^*$. Hence, if $|\delta(Q, w)| = 1$
 we must have $\sum_{a \in \Gamma} |w|_a \ge n - 1$.
\end{proof}

The following was shown in~\cite[Proposition 6.2]{Mar2009b}. 

\begin{toappendix}
 The reader might notice, upon comparing~\cite[Proposition 6.2]{Mar2009b}
 with Proposition~\ref{prop:structure_simp_idem_binary},
 that the parameter $p$ is actually different. Here, it denotes the number of times
 we have to apply $b$ to reach the other state, in~\cite[Proposition 6.2]{Mar2009b},
 by $p$ a state is denoted for which this relation does not hold true.
 The change was made because there seems to be a (minor, the result still holds true of course)
 glitch in~\cite[Proposition 6.2]{Mar2009b}.
 For example, the author writes for ``[...] For $q_2 = n$, the automaton $\mathcal A$ is the \v{C}ern\'y
 automaton [...]'', but $\gcd(n,n) = n$, which would imply, according to the following conclusion in~\cite[Proposition 6.2]{Mar2009b},
 that then the automaton is not synchronizing for $n > 1$, which is not the case.
 Or, to give a more concrete example, an automaton with four states
 $\{1,2,3,4\}$ and the naming as in~\cite[Proposition 6.2]{Mar2009b}
 and $q_2 = 3$ is not synchronizing, but $4$ and $3$ are coprime.
\end{toappendix}

\begin{proposition}
\label{prop:structure_simp_idem_binary}
 Let $\Sigma = \{a,b\}$ be a binary alphabet
 and $\mathcal A = (\Sigma, Q, \delta)$
 be an $n$-state automaton with simple idempotents.
 Suppose $\mathcal A$ is synchronizing and $n > 3$.
 Then, up to renaming of the letters,
 we have only two cases for $\mathcal A$:
 
 \begin{enumerate} 
 \item There exists a sink state $t \in Q$,
  the letter $b$ permutes the states in $Q \setminus \{t\}$
  in a single cycle 
  and $|\delta(Q, a)| = n - 1$ with $t \in \delta(Q\setminus\{t\}, a)$.
 \item The letter $b$ permuates the states in $Q$ in a single cycle and there exists $0 < p < n$
  coprime to $n$ and $s, t\in Q$ such that $t = \delta(s, a) = \delta(s, b^p)$.
 \end{enumerate}
\end{proposition}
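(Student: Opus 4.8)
The plan is to first pin down the roles of the two letters, then determine the cycle structure of the permutational letter by repeatedly exhibiting ``trapped'' sets of states, and finally read off the two cases, adding the coprimality condition at the end.

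First I would rule out that $a$ and $b$ are both permutational: then every word acts as a permutation of $Q$, so $|\delta(Q,w)| = n > 1$ for all $w$, contradicting synchronizability. They cannot both be simple idempotent letters either, and this is the one place where $n > 3$ is essential: if $a$ maps only a state $q_a$ elsewhere and $b$ maps only a state $q_b$ elsewhere, then every state of $Q \setminus \{q_a, q_b\}$ is a fixed point of both letters, hence of every word, so $|\delta(Q,w)| \ge n - 2 \ge 2$ for all $w$ (for $n = 3$ two simple idempotents can in fact synchronize). So, up to renaming the letters, $b$ is permutational and $a$ is a simple idempotent; write $p_a = \delta(q_a, a)$ for the unique state $q_a$ moved by $a$, with $q_a \ne p_a$ and every other state fixed by $a$.

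The key tool is the following observation, which I will use several times: if $S \subseteq Q$ satisfies $|S| \ge 2$, $q_a \notin S$, and $S$ is a union of cycles of the permutation $b$, then $\delta(S,a) = S$ (as $a$ fixes $S$ pointwise) and $\delta(S,b) = S$, so by induction $\delta(S,w) = S$ for every $w$; since $S \subseteq \delta(Q,w)$, this forces $|\delta(Q,w)| \ge 2$ and $\mathcal A$ is not synchronizing. Assuming $\mathcal A$ synchronizing, I then argue: (iii) $q_a$ is not a fixed point of $b$ — otherwise $S = Q \setminus \{q_a\}$ is a union of $b$-cycles of size $n - 1 \ge 2$ with $q_a \notin S$, a trapped set; (i) every cycle of $b$ of length $\ge 2$ contains $q_a$ — such a cycle not containing $q_a$ would itself be a trapped set — so there is at most one such cycle, the one through $q_a$, and by (iii) it has length $\ge 2$; (ii) $b$ has at most one fixed point — two distinct fixed points are each different from $q_a$ by (iii) and form a trapped pair. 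Since $n > 3$ rules out ``all fixed points'', (i)--(iii) leave exactly two shapes for $b$: either $b$ is a single $n$-cycle on $Q$, or $b$ fixes exactly one state $t$ and is a single $(n-1)$-cycle on $Q \setminus \{t\}$, with $q_a$ on that cycle.

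In the second shape, $t$ is fixed by $b$ and, being $\ne q_a$ by (iii), also fixed by $a$; hence $t$ is a sink state and therefore the unique synchronizing state. For $t$ to be reachable from the other states it needs an incoming transition from a state other than itself; since $b$ is a bijection fixing $t$, this edge must come from $a$, which forces $p_a = t$ and $q_a \in Q \setminus \{t\}$. Then $|\delta(Q,a)| = n - 1$ and $t = p_a \in \delta(Q \setminus \{t\}, a)$, i.e.\ the first case. In the first shape, set $s = q_a$, $t = p_a$, and let $p$ be the unique exponent with $0 < p < n$ and $\delta(s, b^p) = t$, so $t = \delta(s,a) = \delta(s, b^p)$. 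It remains to show $\gcd(p,n) = 1$. If $d := \gcd(p,n) > 1$, then the subgroup $\langle b^d \rangle$ has exactly $d$ orbits on $Q$, each of size $n/d$; since $d \mid p$, the letter $a$ sends $q_a$ to a state in the same orbit and fixes every other state, so $a$ maps each orbit into itself, while $b$ permutes the $d$ orbits cyclically. Hence every word permutes the set of orbits, so $\delta(Q,w)$ meets all $d$ of them and $|\delta(Q,w)| \ge d \ge 2$, contradicting synchronizability; thus $\gcd(p,n) = 1$, which is the second case. The main obstacle I expect is recognizing the single ``trapped set'' mechanism behind (i)--(iii) and behind the coprimality argument, together with the care needed for the reachability of the sink state; the remaining steps are bookkeeping.
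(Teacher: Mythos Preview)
The paper does not prove this proposition; it attributes the result to Martyugin~\cite[Proposition 6.2]{Mar2009b} and only adds a remark (in the appendix) correcting a minor glitch in the parameter~$p$ of the cited statement. So there is no proof in the paper to compare your argument against.

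That said, your proof is correct and self-contained. The elimination of the ``both permutational'' and ``both simple idempotent'' cases is right (and you correctly identify $n>3$ as the place where the latter is used). Your single ``trapped set'' mechanism --- an $a$- and $b$-invariant set of size at least two not containing $q_a$ --- cleanly yields (i)--(iii) and hence the two possible cycle structures of $b$. In the fixed-point case, the reachability argument forcing $p_a=t$ is exactly what is needed to land in Case~1. In the single-cycle case, your coprimality argument via the $\langle b^d\rangle$-orbit partition (showing that $a$ preserves the blocks while $b$ permutes them, so every $\delta(Q,w)$ meets all $d$ blocks) is the standard and correct way to conclude. Your write-up supplies precisely the argument the paper defers to the literature.
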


\begin{figure}[htb]
\centering
\begin{minipage}[t]{0.5\textwidth}
\scalebox{1.0}{\begin{tikzpicture}[shorten >=1pt,->,node distance=1.5cm]
  \tikzstyle{vertex}=[circle,draw,minimum size=16pt,inner sep=1pt]
 
  \foreach \name/\angle/\text in {A1/0/1,A2/45/2,A3/90/3,A4/135/4,A5/180/5,A6/225/6,A8/315/8}
  {
    \node[vertex] (\name) at (\angle:2cm) {};  
  }
  \node[vertex,right of=A1] (dangle) {$t$};
  
  \draw (A2) edge[loop above, above] node {$a$} (A2); 
  \draw (A3) edge[loop above, above] node {$a$} (A3); 
  \draw (A4) edge[loop above, above] node {$a$} (A4); 
  \draw (A5) edge[loop left] node {$a$} (A5); 
  \draw (A6) edge[loop left] node {$a$} (A6); 
  \path[->] (A8) edge[loop right] node {$a$} (A8); 
  
  \path[->] (A6) edge [left]  node {$b$} (A5)
            (A5) edge [left]  node {$b$} (A4)
            (A4) edge [above] node {$b$} (A3)
            (A3) edge [above] node {$b$} (A2)
            (A2) edge [right] node {$b$} (A1)
            (A1) edge [right] node {$b$} (A8);
 
 \draw (A1) edge [above] node {$a$} (dangle);
 \draw (dangle) edge [loop above] node {$a,b$} (dangle);
 
 \node at (0,-1.5) {\LARGE $\ldots$};
\end{tikzpicture}}%
\end{minipage}\hfill
\begin{minipage}[t]{0.45\textwidth}
\scalebox{1.0}{\begin{tikzpicture}[shorten >=1pt,->,node distance=1.5cm]
  \tikzstyle{vertex}=[circle,draw,minimum size=16pt,inner sep=1pt]
 
  \foreach \name/\angle/\text in {A1/0/,A2/45/,A3/90/,A4/135/,A5/180/t,A6/225/,A7/270/,A8/315/s}
  {
    \node[vertex] (\name) at (\angle:2cm) {\small $\text$};
  }
  
  \draw (A1) edge[loop right] node {$a$} (A1); 
  \draw (A2) edge[loop right] node {$a$} (A2); 
  \draw (A3) edge[loop below, below] node {$a$} (A3); 
  \draw (A4) edge[loop left] node {$a$} (A4); 
  \draw (A5) edge[loop left]  node {$a$} (A5); 
  \draw (A6) edge[loop left]  node {$a$} (A6); 
  \draw (A7) edge[loop above]  node {$a$} (A7); 
  \draw (A8) edge [above] node {$a$} (A5);
  
  \draw (A1) edge [right] node {$b$} (A8);
  \draw (A7) edge [above] node {$b$} (A6);
  \draw (A6) edge [left] node {$b$} (A5);
  \draw (A5) edge [left] node {$b$} (A4);
  \draw (A3) edge [above] node {$b$} (A2);
  \draw (A2) edge [right,pos=.4] node {$b$} (A1);
  
  \node [rotate=25] at (0.7,-1.7) {\LARGE $\ldots$};
  \node [rotate=25] at (-0.7,1.7) {\LARGE $\ldots$};
\end{tikzpicture}}%
\end{minipage}
 \caption{The two cases from Proposition~\ref{prop:structure_simp_idem_binary}. In the second case,
 for $p = 1$ with the notation from the statement, we get the automata from 
 the \v{C}erny family~\cite{Cerny64}, a family of automata giving the lower bound $(n-1)^2$ for the length of a shortest
 synchronizing word.}
 \label{fig:theorem}
\end{figure}
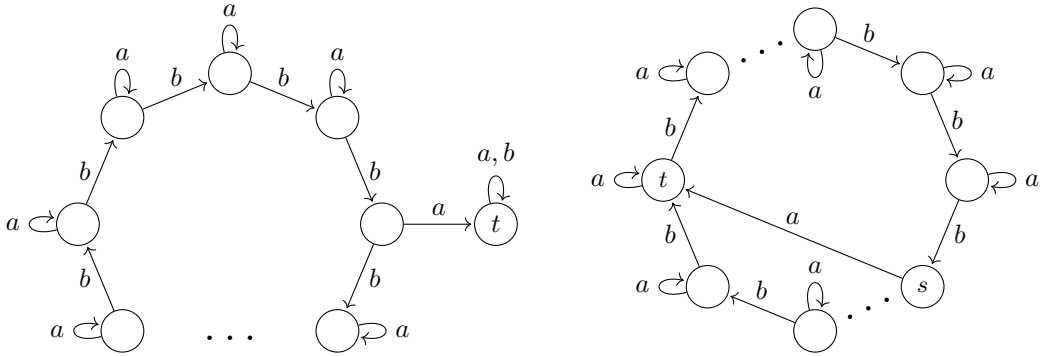

Note that the set of synchronizing words can be rather complicated in both cases.\todo{zeigen, exponential state complexity.} For example, the language
$
  \Sigma^*a^+(b((ba^*)^{n-1})^*)a^+)^{n-2}\Sigma^*
$
contains only synchronizing words for automata of the first type in Proposition~\ref{prop:structure_simp_idem_binary},
and, similarly,\todo{prüfen!}
$
 \Sigma^*a^+((ba^*)^{n-p}(ba^*)^n a^+)^{n-2}\Sigma^*
$
in the second case. However, for example, \todo{restklassen-zustaende syncen und bewegen argumetn.}
the language $(bbba)^*bb(bbba)^*bb(bbba)^*$ contains
synchronizing words and non-synchronizing words for automata of the first type.

Finally, for binary automata with simple idempotents
and an at most $3$-state constraint PDFA, the constrained
synchronization problem is always solvable in polynomial time.
The proof works by case analysis on the possible sequences of words in the constraint language and if it is possible to synchronize the two automata
types listed in Proposition~\ref{prop:structure_simp_idem_binary} with those sequences.

\begin{theoremrep}
\label{thm:simple_idempotents_binary_in_P}
 Let $\mathcal B = (\Sigma, P, \mu, p_0, F)$
 be a constraint automaton with $|P| \le 3$ and $|\Sigma| \le 2$.
 Let $\mathcal A = (\Sigma, Q, \delta)$
 be an input semi-automaton with simple idempotents. 
 Then, deciding if $\mathcal A$ has a synchronizing word in $L(\mathcal B)$
 can be done in polynomial time.
\end{theoremrep}
\begin{proofsketch}
 The cases $|\Sigma| \le 1$
 or $|P| \le 2$ and $|\Sigma| = 2$
 are polynomial time decidable in general
 as shown in~\cite[Corollary 9 \& Theorem 24]{FernauGHHVW19}.
 So, we can suppose $\Sigma = \{a,b\}$
 and $P = \{1,2,3\}$ with $p_0 = 1$.
 As in~\cite{FernauGHHVW19},
 we set $\Sigma_{i,j} = \{ x \in \Sigma : \mu(i, x) = j \}$.
 If the cases in Proposition~\ref{prop:structure_simp_idem_binary} apply, and
 if so, which case,
 can be checked in polynomial time, as we only have to check
 that one letter is a simple idempotent and the other letter permutes
 the states in a single cycle or two cycles with the restrictions
 as written in Proposition~\ref{prop:structure_simp_idem_binary}.
 
 So, we can assume $\mathcal A$
 has one of the two forms as written in Proposition~\ref{prop:structure_simp_idem_binary}.
 Without loss of generality, we assume $a$
 is the idempotent letter and $b$ the cyclic permutation of the states.
 Set $n = |Q|$. We also assume $n > 4$, if $n \le 4$, then the cases for $\mathcal A$
 can be checked in constant time for a given (fixed) constraint language
 $L(\mathcal B)$.

 Next, we only handle the first case of Proposition~\ref{prop:structure_simp_idem_binary} by case analysis. The 
 other case can be handled similarly. Further, let $t \in Q$
 be the state as written in the first case of Proposition~\ref{prop:structure_simp_idem_binary}.
 
 Further, in this sketch, we only handle the case
 that the strongly connected components of $\mathcal B$
 are $\{1\}$ and $\{2,3\}$
 and the subautomaton
 between the states $\{2,3\}$
 is one of the automata
 listed in Table~\ref{thm:simple_idempotents_binary_in_P}. These are the most difficult
 cases, for the remaining cases
 of $\mathcal B$ we refer to the full
 proof in the appendix.

 Next, we handle each of these cases separately.
 We show that for each case, either the input
 automaton with the assumed form is synchronizing (and hence 
 the problem is only to check that $\mathcal A$
 has the stated form, which can be done in polynomial
 time as said in the beginning of this proof),
 or we have another simple to check condition, 
 like if $n$ is even or odd.

\begin{table}[ht] 
    \centering
 
\begin{tabular}{c|c|c|c}
 Case & $|\Sigma_{2,2} + \Sigma_{2,3}| = 2$ & Case & $|\Sigma_{3,3}+\Sigma_{3,2}|=2$ \\ \hline
 
 1 & \scalebox{.7}{
\begin{tikzpicture}[>=latex',shorten >=1pt,node distance=1.5cm,on grid,auto,baseline=0]
 \tikzset{every state/.style={minimum size=1pt}}
 
 \node[state] (1) {};
 \node[state, accepting] (2) [right of=1] {};
 
 \path[->] (1) edge [loop above] node {$a$} (1);
 \path[->] (1) edge [bend left]  node {$b$} (2);
 \path[->] (2) edge [bend left]  node {$b$} (1);
\end{tikzpicture}} 

 & 2 & 
\scalebox{.7}{
\begin{tikzpicture}[>=latex',shorten >=1pt,node distance=1.5cm,on grid,auto,baseline=0]
 \tikzset{every state/.style={minimum size=1pt}}
 
 \node[state] (1) {};
 \node[state, accepting] (2) [right of=1] {};
 
 \path[->] (2) edge [loop above] node {$a$} (2);
 \path[->] (1) edge [bend left]  node {$a$} (2);
 \path[->] (2) edge [bend left]  node {$b$} (1);
\end{tikzpicture}} 
\\ \hline

 3 & \scalebox{.7}{
\begin{tikzpicture}[>=latex',shorten >=1pt,node distance=1.5cm,on grid,auto,baseline=0]
 \tikzset{every state/.style={minimum size=1pt}}
 
 \node[state] (1) {};
 \node[state, accepting] (2) [right of=1] {};
 
 \path[->] (1) edge [loop above] node {$a$} (1);
 \path[->] (1) edge [bend left]  node {$b$} (2);
 \path[->] (2) edge [bend left]  node {$a$} (1);
\end{tikzpicture}}

 & 4 & 
\scalebox{.7}{
\begin{tikzpicture}[>=latex',shorten >=1pt,node distance=1.5cm,on grid,auto,baseline=0]
 \tikzset{every state/.style={minimum size=1pt}}
 
 \node[state] (1) {};
 \node[state, accepting] (2) [right of=1] {};
 
 \path[->] (2) edge [loop above] node {$b$} (2);
 \path[->] (1) edge [bend left]  node {$a$} (2);
 \path[->] (2) edge [bend left]  node {$a$} (1);
\end{tikzpicture}} 
\\ \hline

 5 & \scalebox{.7}{
\begin{tikzpicture}[>=latex',shorten >=1pt,node distance=1.5cm,on grid,auto,baseline=0]
 \tikzset{every state/.style={minimum size=1pt}}
 
 \node[state] (1) {};
 \node[state, accepting] (2) [right of=1] {};
 
 \path[->] (1) edge [loop above] node {$b$} (1);
 \path[->] (1) edge [bend left]  node {$a$} (2);
 \path[->] (2) edge [bend left]  node {$b$} (1);
\end{tikzpicture}} 

 & 6 & 
\scalebox{.7}{
\begin{tikzpicture}[>=latex',shorten >=1pt,node distance=1.5cm,on grid,auto,baseline=0]
 \tikzset{every state/.style={minimum size=1pt}}
 
 \node[state] (1) {};
 \node[state, accepting] (2) [right of=1] {};
 
 \path[->] (2) edge [loop above] node {$a$} (2);
 \path[->] (1) edge [bend left]  node {$b$} (2);
 \path[->] (2) edge [bend left]  node {$b$} (1);
\end{tikzpicture}} 
\\ \hline

 7 & \scalebox{.7}{
\begin{tikzpicture}[>=latex',shorten >=1pt,node distance=1.5cm,on grid,auto,baseline=0]
 \tikzset{every state/.style={minimum size=1pt}}
 
 \node[state] (1) {};
 \node[state, accepting] (2) [right of=1] {};
 
 \path[->] (1) edge [loop above] node {$b$} (1);
 \path[->] (1) edge [bend left]  node {$a$} (2);
 \path[->] (2) edge [bend left]  node {$a$} (1);
\end{tikzpicture}} 

 & 8 & 
\scalebox{.7}{
\begin{tikzpicture}[>=latex',shorten >=1pt,node distance=1.5cm,on grid,auto,baseline=0]
 \tikzset{every state/.style={minimum size=1pt}}
 
 \node[state] (1) {};
 \node[state, accepting] (2) [right of=1] {};
 
 \path[->] (2) edge [loop above] node {$b$} (2);
 \path[->] (1) edge [bend left]  node {$b$} (2);
 \path[->] (2) edge [bend left]  node {$a$} (1);
\end{tikzpicture}} 
\\ \hline

 9 & \scalebox{.7}{
\begin{tikzpicture}[>=latex',shorten >=1pt,node distance=1.5cm,on grid,auto,baseline=0]
 \tikzset{every state/.style={minimum size=1pt}}
 
 \node[state] (1) {};
 \node[state, accepting] (2) [right of=1] {};
 
 \path[->] (1) edge [bend left]  node {$a,b$} (2);
 \path[->] (2) edge [bend left]  node {$a$} (1);
\end{tikzpicture}}

 & 10 & 
\scalebox{.7}{
\begin{tikzpicture}[>=latex',shorten >=1pt,node distance=1.5cm,on grid,auto,baseline=0]
 \tikzset{every state/.style={minimum size=1pt}}
 
 \node[state] (1) {};
 \node[state, accepting] (2) [right of=1] {};
 
 \path[->] (1) edge [bend left]  node {$a$} (2);
 \path[->] (2) edge [bend left]  node {$a,b$} (1);
\end{tikzpicture}} 
\\ \hline

 11 & \scalebox{.7}{
\begin{tikzpicture}[>=latex',shorten >=1pt,node distance=1.5cm,on grid,auto,baseline=0]
 \tikzset{every state/.style={minimum size=1pt}}
 
 \node[state] (1) {};
 \node[state, accepting] (2) [right of=1] {};
 
 \path[->] (1) edge [bend left]  node {$a,b$} (2);
 \path[->] (2) edge [bend left]  node {$b$} (1);
\end{tikzpicture}} 

 & 12 & 
\scalebox{.7}{
\begin{tikzpicture}[>=latex',shorten >=1pt,node distance=1.5cm,on grid,auto,baseline=0]
 \tikzset{every state/.style={minimum size=1pt}}
 
 \node[state] (1) {};
 \node[state, accepting] (2) [right of=1] {};

 \path[->] (1) edge [bend left]  node {$b$} (2);
 \path[->] (2) edge [bend left]  node {$a,b$} (1);
\end{tikzpicture}} \\ \hline

\end{tabular}
 \caption{Cases for a partial subautomaton between the states $\{2,3\}$
  that is not complete. See the proof of Theorem~\ref{thm:simple_idempotents_binary_in_P}
  for details.}
 \label{tab:inner_automata_three_states}
  \vspace{-8mm}
\end{table}

 We handle the cases as numbered in Table~\ref{thm:simple_idempotents_binary_in_P}.

\begin{enumerate}

\item \label{case:a_bb_star} In this case $L(\mathcal B_{2,\{2\}}) = (a+bb)^*$.   
 Let $s \in Q \setminus\{t\}$ be the state with $\delta(s, a) = \delta(t, a) = t$.
 If $n$ is odd, then $|Q \setminus \{t\}|$ is even
 and the single cycle induced by $b$ on the states in $Q \setminus \{t\}$
 splits into two cycles for the word $bb$, i.e., we have precisely
 two disjoint subsets $A, B \subseteq Q \setminus\{t\}$
 of equal size such that the
 states in one subset can be mapped onto each other by a word in $(bb)^*$
 but we cannot map states between those subset by a word from $(bb)^*$.
 Suppose, without loss of generality, that $s \in A$.
 Then for each $q \in A$ we have $\delta(q, a) = q$
 and $\delta(q, bb) \in A$. 
 Hence, we cannot map a state from $A$ to $s$, and so not to $t$, the unique 
 synchronizing state. As $n \ge 5$ and 
 for every $w \in \Sigma_{1,1}^*\Sigma_{1,2}$ 
 we have $|\delta(Q, w)| \ge n - 1$,
 we must have $\delta(A, w) \ne \emptyset$
 for every $w \in \Sigma_{1,1}^*\Sigma_{1,2}$. So, $\mathcal A$ cannot be synchronized by
 a word from $L(\mathcal B_{1,\{2\}}) = \Sigma_{1,1}^* \Sigma_{1,2} (a + bb)^*$.
 If $n$ is even, than $bb$ also permutes the states in $Q \setminus \{t\}$
 in a single cycle
 and the word $a(bba)^{n-2}$ synchronizes $\mathcal A$.
 So, picking any $u \in \{a,b\}^*$
 with $\mu(1, u) = 2$, the word $u(bba)^{n-1} \in L(\mathcal B)$
 synchronizes $\mathcal A$.
 
\item In this case $L(\mathcal B_{2,\{2\}}) = (aa^*b)^*$.
 The word $(ab)^{n-1}$ synchronizes $\mathcal A$.
 Pick any $u \in L(\mathcal B_{1,\{2\}})$,
 then $u(ab)^{n-1} \in L(\mathcal B)$
 synchronizes $\mathcal A$.

\item In this case $L(\mathcal B_{2,\{2\}}) = (a+ba)^*$.
  For every $q \ne t$ we have $\delta(q, ba) = \delta(q, b)$
  and the word $a(ba)^{n-2}$ synchronizes $\mathcal A$.
  Let $u \in \{a,b\}^*$ be a word with $\mu(1, u) = 2$,
  Then $\mu(1, ua(ba)^{n-2}) = 2$ synchronizes $\mathcal A$. 

\item In this case $L(\mathcal B_{2,\{2\}}) = (ab^*a)^*$.
 The word $(aba)^{n-2}$ synchronizes $\mathcal A$ (as
 $a$ is idempotent, it has the same effect as the synchronizing word $a(ba)^{n-2}$
 on the states of $\mathcal A$).
 Pick any $u \in L(\mathcal B_{1,\{2\}})$,
 then $u(aba)^{n-2}\in L(\mathcal B)$
 synchronizes $\mathcal A$.

\item In this case $L(\mathcal B_{2,\{2\}}) = (b+ab)^*$.
 The word $(ab)^{n-1} = a(ba)^{n-2}b$ synchronizes $\mathcal A$,
 as $a(ba)^{n-2}$ synchronizes $\mathcal A$.
 Pick any $u \in L(\mathcal B_{1,\{2\}})$,
 then $u(aba)^{n-2}\in L(\mathcal B)$
 synchronizes $\mathcal A$.

\item In this case $L(\mathcal B_{2,\{2\}}) = (ba^*b)^*$.
 As $a$ is idempotent, the words $bab$ and $ba^ib$ for $i > 1$
 all have the same effect on the states of $\mathcal A$.
 So, only the language $\{ bb, bab \}^* \subseteq L(\mathcal B_{2,\{2\}})$
 is of relevance for the question of synchronizability of $\mathcal A$
 with respect to the constraint language $L(\mathcal B)$.
 We have $|\delta(Q, bab)| = n - 1$
 and if $n$ is odd, we can argue as in Case~\ref{case:a_bb_star}
 that $\mathcal A$ can not be synchronized
 by a word from $L(\mathcal B_{1,\{2\}})$.
 If $n$ is even, the word $bb$ induces a single cycle on the
 states in $Q \setminus \{t\}$
 and $(bbbab)^{n-1}$ synchronizes $\mathcal A$.
 As in previous cases, by appending a suitable prefix,
 we can construct a synchronizing word in $L(\mathcal B)$.

\item In this case $L(\mathcal B_{2,\{2\}}) = (b+aa)^*$.
 Then, choosing any $u \in L(\mathcal B_{1,\{2\}})$,
 the word $u(baa)^{n-1}$ synchronizes $\mathcal A$.

\item In this case $L(\mathcal B_{2,\{2\}}) = (bb^*a)^*$.
 As $(ba)^{n-1} \in L(\mathcal B_{2,\{2\}})$ synchronizes $\mathcal A$
 we can pick any $u \in \{a,b\}^*$ such that $\mu(1, u) = 2$
 and have the synchronizing word $u(ba)^{n-1} \in L(\mathcal B)$ 
 for $\mathcal A$. 

\item Here $L(\mathcal B_{2,\{2\}}) = (aa+ba)^*$.
 Using that $aa(ba)^{n-2} \in L(\mathcal B_{2,\{2\}})$ synchronizes $\mathcal A$
 as in the previous cases by appending a suitable prefix from $L(\mathcal B_{1,\{2\}})$.
\item Here $L(\mathcal B_{2,\{2\}}) = (aa+ab)^*$.
 Using that $aa(ab)^{n-2}aa \in L(\mathcal B_{2,\{2\}})$ synchronizes $\mathcal A$
 as in the previous cases by appending a suitable prefix from $L(\mathcal B_{1,\{2\}})$.
 
\item Here $L(\mathcal B_{2,\{2\}}) = (ab+bb)^*$.
 Using that $a(ba)^{n-2}b = (ab)^{n-1} \in L(\mathcal B_{2,\{2\}})$ synchronizes $\mathcal A$
 as in the previous cases by appending a suitable prefix from $L(\mathcal B_{1,\{2\}})$.
 
\item Here $L(\mathcal B_{2,\{2\}}) = (ba+bb)^*$.
 Using that $ba(ba)^{n-2} \in L(\mathcal B_{2,\{2\}})$ synchronizes $\mathcal A$
 as in the previous cases by appending a suitable prefix from $L(\mathcal B_{1,\{2\}})$.\qedhere
\end{enumerate}
\end{proofsketch}

\begin{proof}
 The cases $|\Sigma| \le 1$
 or $|P| \le 2$ and $|\Sigma| = 2$
 are polynomial time decidable in general
 as shown in~\cite[Corollary 9 \& Theorem 24]{FernauGHHVW19}.
 So, we can suppose $\Sigma = \{a,b\}$
 and $P = \{1,2,3\}$ with $p_0 = 1$.
 As in~\cite{FernauGHHVW19},
 we set $\Sigma_{i,j} = \{ x \in \Sigma : \mu(i, x) = j \}$.
 If the cases in Proposition~\ref{prop:structure_simp_idem_binary} apply, and
 if so, which case,
 can be checked in polynomial time, as we only have to check
 that one letter is a simple idempotent and the other letter permutes
 the states in a single cycle or two cycles with the restrictions
 as written in Proposition~\ref{prop:structure_simp_idem_binary}.
 
 So, we can assume $\mathcal A$
 has one of the two forms as written in Proposition~\ref{prop:structure_simp_idem_binary}.
 Without loss of generality, we assume $a$
 is the idempotent letter and $b$ the cyclic permutation of the states.
 Set $n = |Q|$. We also assume $n > 4$, if $n \le 4$, then the cases for $\mathcal A$
 can be checked in constant time for a given (fixed) constraint language
 $L(\mathcal B)$.

 Next, we only handle the first case of Proposition~\ref{prop:structure_simp_idem_binary} by case analysis. The 
 other case can be handled similarly. Further, let $t \in Q$
 be the state as written in the first case of Proposition~\ref{prop:structure_simp_idem_binary}.
 
 We can assume every state in $P$ is reachable from $p_0$
 and also that at least one state in $\{2,3\}$
 is reachable from the other state in this set (otherwise, it is a union
 of two constraint languages over two-state constraint automata,
 each giving a constraint problem in $\PTIME$, and so, also their union
 gives a problem in \PTIME, as we can check each constraint individually,
 see also~\cite[Lemma 13]{FernauGHHVW19}).
 Without loss of generality, we assume $\Sigma_{2,3} \ne \emptyset$.
 If the states in $P$ form a strongly connected component, then in $\mathcal B$
 we can map every state back to the starting state and
 by~\cite[Theorem 17]{FernauGHHVW19} the constrained problem
 is solvable in polynomial time.
 If every state forms its own strongly connected component,
 we can assume $3 \in F$ (otherwise, it reduces to the two-state case).
 As every state is reachable from $p_0$,
 we then have $\Sigma_{2,1} = \Sigma_{3,1} = \emptyset$.
 Then
 \[
  L(\mathcal B) = \Sigma_{1,1}^* \Sigma_{1,2} \Sigma_{2,2}^* \Sigma_{2,3} \Sigma_{3,3}^*
   \cup \Sigma_{1,1}^* \Sigma_{1,3} \Sigma_{3,3}^*.
 \] 
 and either $\Sigma_{1,2}\cdot \Sigma_{2,3} \ne \emptyset$
 or $\Sigma_{1,3} \ne \emptyset$.
 If $\Sigma_{3,3} = \{a,b\}$, then $\mathcal A$
 has a synchronizing word in $L(\mathcal B)$ iff it has a synchronizing
 word at all. For if we take an arbitrary synchronizing
 word $u$, we can take any word $v$
 with $\mu(1, v) = 3$ and then $vu$ is synchronizing for $\mathcal A$ (this is actually
 a special case of~\cite[Theorem 15]{FernauGHHVW19}).
 As the unconstrained synchronization problem is polynomial time solvable,
 in the case $\Sigma_{3,3} = \{a,b\}$
 the constrained problem is also polynomial time solvable.
 As $\mathcal B$ is deterministic,
 we can furthermore assume $|\Sigma_{i,i}| \le 1$
 for all $i \in \{1,2,3\}$ (if, for example $|\Sigma_{2,2}| = 2$,
 this implies $\Sigma_{2,3} = \emptyset$, which reduces
 again to the two-state case).
 However, $a$ is idempotent, and so, if, for example $\Sigma_{2,2} = \{a\}$,
 which implies $a \notin \Sigma_{2,3}$,
 we can suppose the corresponding self-loop in $\mathcal B$
 is traversed at most two times.
 Considering all such cases, we can deduce that every synchronizing word
 has the same effect as a synchronizing word where
 the number of $a$'s in it is bounded by three. 
 But as $n > 4$, by Lemma~\ref{lem:each_sync_word_contains_a}
 every synchronizing word $w$
 has to fulfill $|w| \ge 4$
 and so, for these constraint languages every input semi-automaton $\mathcal A$
 with at least five states has no synchronizing word in $L(\mathcal B)$.

 So, from now on we can assume that we have one strongly connected
 component with precisely two states and one with a single state.
 We handle the case that $\{2,3\}$ is a strongly connected component here,
 the other case that $\{1,2\}$ is a strongly connected component can
 be handled similarly. 
 As every state is reachable from $p_0$, 
 under this assumption we must have $\Sigma_{2,1} = \Sigma_{3,1} = \emptyset$.
 As either $\Sigma_{1,2} \ne \emptyset$
 or $\Sigma_{1,3}\ne \emptyset$, we must have $|\Sigma_{1,1}|\le 1$.
 Note that $F \cap \{2,3\} \ne \emptyset$,
 for otherwise the constrained problem is equivalent
 to the problem over a one-state automaton, which
 is in \PTIME\  (see~\cite[Corollary 9]{FernauGHHVW19}).
 Further, we can assume $F = \{2\}$,
 as if $\mu(p_0, u) \in F$,
 by the assumptions there exists a word $w$
 such that $\mu(p_0, uw) = 2$
 and conversely, if $\mu(p_0, u) = 2$,
 there exists a word $w$ such that $\mu(p_0, uw) \in F$
 as $\{2,3\} \cap F \ne \emptyset$ and both states are reachable
 from $p_0$. As we can append arbitrary words to a synchronizing
 word and still have a synchronizing word,
 asking for a synchronizing word in $L(\mathcal B)$
 under the stated assumptions is equivalent 
 as asking for a synchronizing word with final state set $\{2\}$.

 We can focus only on the case $\Sigma_{1,3} = \emptyset$,
 as showing that the problem is in \PTIME \todo{hier etwas genauer?}
 in this case, the case $\Sigma_{1,3} \ne \emptyset$
 can be handled symmetrically under
 the assumption $\Sigma_{1,2} \cdot \Sigma_{2,3} = \emptyset$
 and the general case can be written as a union of both cases
 with an appropriate distribution of the final states,
 which gives that polynomial time solvability by using~\cite[Lemma 13]{FernauGHHVW19}.

 For $p \in P$ and $E \subseteq P$, we write $\mathcal B_{p,E} = (\Sigma, P, \mu, p, E)$
 for the PDFA that results from $\mathcal B$ by changing the start state to $p$
 and the set of final states to $E$.

 So, with the assumption that $\{2,3\}$
 is a strongly connected component, 
 and hence $\Sigma_{2,1} = \Sigma_{3,1}  = \emptyset$,
 and $\Sigma_{1,3} = \emptyset$,
 we can deduce that every word
 in $\Sigma_{1,1}^*\Sigma_{1,2}$ (note that
 this language does not equal $L(\mathcal B_{1,\{2\}})$)
 can map at most two states to a single
 state, and one of these states must be $t$,
 and every other state is mapped to distinct states, 
 i.e., for every $w \in \Sigma_{1,1}^*\Sigma_{1,2}$,
 $|\delta(Q, w)| \ge n - 1$
 and if $|\delta(Q, w)| = n - 1$,
 then two states, where one is $t$ itself,
 are mapped to $t$.
 If $\Sigma_{1,1} = \emptyset$
 this is clear as $w \in \{a,b\}$.
 If $\Sigma_{1,1} = \{a\}$,
 we have $w \in a^*b$
 and, as $a$ is idempotent, every such word has the same effect
 on the state set $Q$ of $\mathcal A$ as the two words $b$
 or $ab$.
 If $\Sigma_{1,1} = \{b\}$, then $w \in b^*a$
 and $\delta(Q, b^ia) = \delta(Q, a)$ for all $i \ge 1$,
 as $b$ permutes all states. The claim
 about $t$ follows by choice of this state, see
 Proposition~\ref{prop:structure_simp_idem_binary}.

 Now, note that if the subautomaton between
 the states $\{2,3\}$ is complete,
 then if we have any synchronizing word $u$,
 by choosing $w, v$
 such that $\mu(p_0, w) = 2$
 and $\mu(2, uv) = 2$, which can
 be done by the assumptions,
 we have a synchronizing word $wuv \in L(\mathcal B)$,
 and so the problem to find 
 a synchronizing word in $L(\mathcal B)$
 is equivalent to the unconstrained synchronization problem,
 which is solvable in polynomial time.
 Hence, we can assume at least one transition 
 between the states in $\{2,3\}$
 is undefined.
 Further, if strictly more than one transition
 between these states is not defined,
 then, as these states form a strongly connected
 component, we have $\Sigma_{2,2} = \Sigma_{3,3} = \emptyset$
 and in this
 case, if $\Sigma_{2,3} = \Sigma_{3,2}$,
 we either have $L(\mathcal B) = \Sigma_{1,1}^*\Sigma_{1,2}(bb)^*$
 or $L(\mathcal B) = \Sigma_{1,1}^*\Sigma_{1,2}(aa)^*$,
 and it is easy to see that for $n > 4$
 we cannot have a synchronizing word for $\mathcal A$
 in these languages.
 If $\Sigma_{2,3} \ne \Sigma_{3,2}$,
 then $L(\mathcal B) = \Sigma_{1,1}^*\Sigma_{1,2}(ab)^*$
 or $L(\mathcal B) = \Sigma_{1,1}^*\Sigma_{1,2}(ba)^*$
 and as $a(ba)^{n-2}$ synchronizes $\mathcal A$
 if it has the form as stated in case one
 of Proposition~\ref{prop:structure_simp_idem_binary},
 by choosing
 $u \in \Sigma_{1,1}^* \Sigma_{1,2}$,
 the word $ua(ba)^{n-2}b$ synchronizes $\mathcal A$
 and is in $\Sigma_{1,1}^*\Sigma_{1,2}(ab)^*$
 and $uba(ba)^{n-2}$ synchronizing $\mathcal A$
 and is in $\Sigma_{1,1}^*\Sigma_{1,2}(ab)^*$.
 
 So, the only cases left for the subautomaton
 between the states $\{2,3\}$
 are listed in Table~\ref{thm:simple_idempotents_binary_in_P},
 which have been handled in the main text.
 
\end{proof}

\section{Conclusion}

\todo{noch ergebnisse und offene fragen (trichotomy) zu allg constrained sync diskutieren?}

Here, we have shown that for commutative input automata, the constrained
synchronization problem is tractable for every regular constraint language.
Hence, this case is the first case of input automata where the question of the computational
complexity is settled completely. In~\cite{DBLP:conf/dlt/Hoffmann21}, it was shown that for
weakly acyclic automata, the problem is always in \NP, and there exists \NP-hard
instances and tractable instance of the problem. However, it was not shown that these are the only complexities
that can arise.

Note that, on the other side, for commutative regular constraint languages,
we can realize \PSPACE-complete problems (for example for $\{a,b\}^*c\{a,b\}^*)$),
\NP-complete (for $a^*ba^*ba^*$) or problems in \PTIME.
In fact, a full classification giving a trichotomy that only
these three complexities arise has been shown for commutative regular constraints~\cite{HoffmannCocoonExtended}.

Furthermore, in the present work, we have started the investigation of the constrained
synchronization problem for input automata with simple idempotents.
Contrary to the results for commutative automata, which are conclusive, in the case of automata
with simple idempotents, the present results are only the starting point. They entail the first non-trivial
instances of these automata -- for example the automata from the \v{C}ern\'y family~\cite{Cerny64,Vol2008}
giving a lower bound for the length of a shortest synchronizing word are binary automata
with simple idempotents -- but also the first instances of constraint automata that can realize \PSPACE-complete or \NP-complete problems in the general case~\cite{FernauGHHVW19}.
In this respect, it is remarkable that the complexity drops to being polynomial time solvable in this case.
However, it is unclear what happens for larger alphabets and different constraint languages when only automata with simple idempotents are considered as input.

\smallskip \noindent {\footnotesize
\textbf{Acknowledgement.} I thank  anonymous reviewers
of the extended version (submitted) of~\cite{HoffmannCocoon20}, whose
feedback influenced Section~\ref{sec:comm_input}.
I am also grateful to the reviewers of a previous version.
One reviewer suggested Proposition~\ref{prop:intersection_waa_com} and its proof
and noted that it greatly simplifies the arguments
from Section~\ref{sec:comm_input}. I am really pleased by
this result, which greatly simplified and improved the presentation.
}

\bibliography{ms,cerny} 
\end{document}